\newtheorem{corollary}{Corollary}
\newtheorem{proposition}{Proposition}
\newtheorem{lemma}{Lemma}
\begin{document}
%
% paper title
% Titles are generally capitalized except for words such as a, an, and, as,
% at, but, by, for, in, nor, of, on, or, the, to and up, which are usually
% not capitalized unless they are the first or last word of the title.
% Linebreaks \\ can be used within to get better formatting as desired.
% Do not put math or special symbols in the title.
\title{Transmission Scheme, Detection and Power Allocation for Uplink User Cooperation with NOMA and RSMA}
\author{Omid~Abbasi, \IEEEmembership{Student Member,~IEEE}, Halim~Yanikomeroglu, \IEEEmembership{Fellow,~IEEE} \thanks{This work was supported by Huawei Canada Co.,
Ltd.}
\thanks{O. Abbasi and H. Yanikomeroglu are with the Department of Systems and Computer
Engineering, Carleton University, Ottawa, ON K1S5B6, Canada. e-mail: omidabbasi@sce.carleton.ca; halim@sce.carleton.ca (\textit{Corresponding author: Omid Abbasi})}
\thanks{A preliminary version of this work appeared in the Proceedings of the 2021 IEEE Wireless Communications and Networking Conference (WCNC) \cite{conf_version}.}}
\maketitle

% As a general rule, do not put math, special symbols or citations
% in the abstract or keywords.
\begin{abstract}
  In this paper, we propose two novel cooperative-non-orthogonal-multiple-access (C-NOMA) and cooperative-rate-splitting-multiple-access (C-RSMA) schemes for uplink user cooperation. At the first mini-slot of these schemes, each user transmits its signal and receives the transmitted signal of the other user in full-duplex mode, and at the second mini-slot, each user relays the other user's message with amplify-and-forward (AF) protocol. At both schemes, to achieve better spectral efficiency, users transmit signals in the non-orthogonal mode in both mini-slots. In C-RSMA, we also apply the rate-splitting method in which the message of each user is divided into two streams.
  In the proposed detection schemes for C-NOMA and C-RSMA, we apply a combination of maximum-ratio-combining (MRC) and successive-interference-cancellation (SIC). Then, we derive the achievable rates for C-NOMA and C-RSMA, and formulate two optimization problems to maximize the minimum rate of two users by considering the proportional fairness coefficient. We propose two power allocation algorithms based on successive-convex-approximation (SCA) and geometric-programming (GP) to solve these non-convex problems.  Next, we derive the asymptotic outage probability of the proposed C-NOMA and C-RSMA schemes, and prove that they achieve diversity order of two. Finally, the above-mentioned performance is confirmed by simulations.
\end{abstract}

% Note that keywords are not normally used for peerreview papers.
\begin{IEEEkeywords}
User cooperation, proportional fairness, non-orthogonal multiple access, rate-splitting multiple access, power allocation, detection, and diversity order.
\end{IEEEkeywords}

% For peer review papers, you can put extra information on the cover
% page as needed:
% \ifCLASSOPTIONpeerreview
% \begin{center} \bfseries EDICS Category: 3-BBND \end{center}
% \fi
%
% For peerreview papers, this IEEEtran command inserts a page break and
% creates the second title. It will be ignored for other modes.
\IEEEpeerreviewmaketitle

\section{Introduction}
\label{intro}
Three generic
services of the fifth-generation (5G) of wireless cellular systems are enhanced mobile broadband (eMBB), ultra-reliable low-latency communications (URLLC),
and massive machine-type communications (mMTC) \cite{NR_magazine}. eMBB accommodates high
data rate applications such as 4K video and virtual reality
(VR), and it can leverage coding
over large transmission blocks due to its non-critical latency requirements. In contrast, URLLC
imposes strict latency constraints such as autonomous vehicles, tactile internet, or
remote surgery, hence requiring transmissions
localized in time, while still ensuring high-reliability levels. In general, URLLC transmissions are sporadic
with a short packet size and with a relatively low data rate. mMTC traffic consists of a large
number of devices transmitting small payloads to a common receiver. New technological
trends will redefine 5G application types by combining
classical URLLC, eMBB, and mMTC and introducing new
services such as eMBB\&URLLC \cite{6G_saad}.
The distinction between
eMBB and URLLC will no longer be sustainable to support
applications such as extended reality (including augmented reality, VR), connected autonomous vehicles, autonomous drones, and blockchain. This is
because these applications require, not only high reliability and
low latency but also high data rates. \par

Cooperative communication has been well established in wireless networks that enable users to share their antennas and generate a virtual multiple-antenna transmitter that allows them to achieve transmit diversity \cite{Nosratinia}. Authors in \cite{Aazhang} proposed a new form of spatial diversity, in which diversity gains
were achieved via the cooperation of mobile users. They showed that, even
though the inter-user channel is noisy, cooperation leads not only
to an increase in capacity for both users but also to a more robust
system, where users’ achievable rates are less susceptible to
channel variations. 
Authors in \cite{kaya_power} obtained the power allocation policies for the cooperative scheme in \cite{Aazhang}
that maximize the average rates for a fading Gaussian multiple access channel (MAC), subject to average power constraints. In \cite{kaya_ofdma}, the authors proposed encoding strategies for a two
user cooperative orthogonal frequency division multiple access (OFDMA) system and obtained the expressions for the resulting
achievable rate regions. In \cite{laneman_04}, the authors developed low-complexity cooperative
diversity protocols that combat fading induced by multipath
propagation in wireless networks. They analyzed the outage probability and diversity order of their proposed schemes. They proved that
using distributed antennas, one can reach the powerful benefits
of space diversity without the need for physical arrays, though at a loss
of spectral efficiency due to half-duplex operation and at
the cost of additional receiver hardware.
Authors in \cite{Nabar_04} studied three different time division multiple access  (TDMA)-based cooperative protocols
for a simple fading relay channel. For each of the protocols, assuming Gaussian codebooks, they derived the ergodic and outage capacities. In \cite{superposition_modulation_larsson_05}, the authors proposed a new strategy for cooperative transmit
diversity based on superposition modulation and multiuser detection.
Their new scheme outperformed classical cooperative
diversity based on decode-and-forward (DF) at the same computational complexity. Authors in \cite{azarian05} proposed a new amplify and forward (AF) protocol for cooperative MAC that achieves the optimal
diversity-multiplexing tradeoff \cite{tradeoff}. A distinguishing feature of their
proposed protocol was that it did not rely on
orthogonal subspaces, allowing for more efficient use of resources.
Their results showed that the suboptimality of
previously proposed cooperative protocols stems from their use of orthogonal
subspaces rather than the half-duplex constraint. In \cite{performace_11}, the authors proposed a low-complexity suboptimal detector for a two-user cooperative MAC that utilizes the superposition modulation
scheme proposed in \cite{superposition_modulation_larsson_05}. Bit error
rate results showed that the superposition modulation can achieve both full diversity order and high coding
gain. \par
Non-orthogonal multiple access (NOMA) has attracted tremendous attention from academia and industry in recent years due to its higher spectral efficiency rather than orthogonal multiple access (OMA) \cite{saito2013non}. In NOMA, the signals of different users are differentiated in the power domain, and users transmit their signals in the same time/frequency/code resources \cite{uplink_noma_power}. It has been shown that in  MAC, transmitting signals at the same resources and decoding them at the receiver by performing successive interference cancellation (SIC) achieve the capacity of MAC in terms of sum rate \cite{tse2005fundamentals}. Note that transmitting signals of users in the same resources and performing SIC at the receiver side without time sharing is called uplink NOMA, and it can not achieve all the points of the capacity region in MAC.
%Due to the heterogeneous rate requirements of different users, we have to provide the best rate for users by considering their different quality of service (QoS) requirements.
 Indeed, the capacity region of MAC can be achieved by performing SIC and time sharing for different decoding orders of users' signals \cite{tse2005fundamentals}. In \cite{Han-kobayashi}, it has been shown that Han-Kobayashi coding scheme can achieve all the points of capacity region for the two-user interference channel. Authors in \cite{capacity_rsma} showed that rate-splitting (RS) without performing time sharing can achieve the capacity region of MAC. They showed that by splitting the signals of each user into two streams and forming two virtual sources for each user at the transmitter side, and performing SICs for received symbols at the receiver side, the capacity region can be achieved.\par

\subsection{Related Works}
Rate-splitting multiple access (RSMA) has been
recently proposed as an effective approach to provide a more
general and robust transmission framework compared to NOMA and space-division
multiple access (SDMA) \cite{Clercks}, \cite{Unified_RSMA}. Most of the existing works studied the use of RSMA for the downlink rather than in the uplink. On the other hand, using RSMA
for uplink data transmission can theoretically achieve the optimal rate region \cite{capacity_rsma}. In \cite{RSMA_uplink}, the authors proposed an uplink RSMA
scheme to avoid user paring
and significantly reduce the scheduling complexity. They derived
the exact closed-form expressions of the outage probability and
achievable sum rate of the proposed scheme. In \cite{RS_code}, authors incorporated RS into the
design of an uplink code domain NOMA scheme. By utilizing SIC, the code domain NOMA with
RS reduced the decoding complexity dramatically while achieving
similar performance as those taking the belief propagation
algorithm. In \cite{Baluano}, an RS
scheme to guarantee max-min fairness of uplink single-input-multiple-output (SIMO)-NOMA
was studied. In \cite{Mingzhi_journal,saad_uplink}, the problem of maximizing the wireless users’ sum-rate for uplink RSMA communications was studied. They solved a sum-rate maximization
problem with proportional rate constraints by adjusting the users’ transmit power and the BS’s decoding
order. Authors in \cite{poor_rsma}, proposed two kinds
of RS schemes, namely, fixed RS and cognitive RS
schemes, to realize RS for uplink NOMA with the
aim of improving user fairness and outage performance in delay-limited
transmissions.\par
Cooperation among users has been utilized a lot in downlink NOMA-based communication. In \cite{C-NOMA_downlink}, the authors proposed a cooperative NOMA transmission
scheme that uses the fact that strong users in NOMA
systems have prior information about the weak users' messages. In \cite{Full_CNOMA}, the authors proposed a full-duplex device-to-device (D2D)-aided cooperative NOMA scheme to improve
the outage performance of the weak user in a NOMA user
pair, where the weak user is helped by the strong user with
the capability of full-duplex D2D communications. Authors in \cite{occupy_COV} proposed a cooperative wireless communication protocol
framework for high-performance industrial-automation systems
that demand ultra-high reliability and low latency for
many message streams within a network with many active
nodes. They proposed to combine the user
messages as a single packet, then transmit the packet using
a two-phase relaying strategy in order to harvest diversity. Authors in \cite{Wei_Yu} devised an alternative
approach that splits the per-cell message into the broadcast part
and the relay part, thereby enabling layered data transmissions
to the receivers of various channel conditions. In \cite{inspired_omid}, we proposed a cooperative relaying system (CRS) based on RS with an amplify-and-forward relay. We introduced a novel detection scheme in which the received signals of two consecutive time slots at the destination were applied. Authors in \cite{cooperative_RS_clercks} proposed a cooperative RS for multiple-input-single-output (MISO) broadcast channel with user relaying. In \cite{Traj_Omid}, we showed that in a two-user network with a dedicated AF relay, NOMA always has a better or equal sum-rate in comparison to OMA at a high signal-to-noise-ratio (SNR) regime. In \cite{User_relaying_Mao}, authors investigated the max-min fairness of
K-user CRS for MISO channel with user relaying by designing
the precoder, the RS message split and time slot allocation. Authors in \cite{secrecy_mao} employed the cooperative RS technique to enhance the secrecy sum rate for the MISO Broadcast Channel, consisting of two legitimate users and one eavesdropper. All of the above-mentioned works consider non-orthogonal cooperative communication for downlink. In \cite{Dedicated_AF}, the authors proposed an uplink cooperative NOMA system, where a dedicated AF relay was used to help two uplink users transmit information to the base station (BS). Authors in \cite{Dedicated_DF} proposed an uplink cooperative NOMA
system, where a dedicated full-duplex decode and forward relay is used to help two uplink users. In \cite{IET_uplink_CNOMA}, the authors proposed an uplink NOMA system with cooperative
full-duplex relaying, where the user closer to the BS is considered as a full-duplex relay to aid
the transmission from the far user to the BS. The closer user first decodes the signals transmitted from the far user and then
forwards them using superposition coding (SC) to the BS on top of transmitting its own information signals to the BS.

\subsection{Motivations and Contributions}
   
 The cooperative scheme with AF relaying in \cite{laneman_04} was performed at the orthogonal resource blocks, and we name it cooperative (C)-OMA. In this paper, in order to increase spectral efficiency, we propose a novel cooperative non-orthogonal transmission and detection scheme, and find its performance gain over the C-OMA scheme. We name this proposed method as cooperative (C)-NOMA scheme. At the first mini-slot of the proposed C-NOMA scheme, each user transmits its own signal in non-orthogonal mode and at the same resource block. We assume that users work in full-duplex mode, and hence receive the message of another user at the first mini-slot. At the second mini-slot, users relay the other user's message with AF protocol \footnote{At AF relaying each user does not require to decode the other user's message, and hence it is better for low latency communication \cite{patel2006statistical}. On the other hand, some users may not be interested in their messages' decoding by the other users, and hence AF relaying is better than DF relaying from a secrecy perspective.} again at the same resource block. Then we propose a detection scheme at the BS to decode the received signals from two mini-slots at the C-NOMA scheme. We apply maximum ratio combining (MRC) in order to combine the received signals from two mini-slots and attain full diversity. We also apply SIC in order to remove the interference of decoded symbols and achieve better spectral efficiency.\par
 The second novel transmission and detection scheme that we propose in this paper is based on the rate-splitting method. We name this proposed method as cooperative (C)-RSMA scheme, and find its performance gain over traditional C-OMA and the proposed C-NOMA schemes in this paper. In the transmission phase, the message of each user is divided into two streams. Then a combination of these two streams at each user is transmitted and relayed in non-orthogonal mode at the first and second mini-slots, respectively. We propose a detection scheme at the BS for this C-RSMA scheme in which the streams of users are decoded alternately. Similar to the C-NOMA scheme, the MRC and SIC are applied to decode the streams of each user in the proposed detection scheme. Finally, the original messages of each user are reconstructed at the BS by combining the two decoded streams for that user.  \par

% However, the performance gains of these multiple access schemes are unknown when they are utilized in cooperative communication. 
After proposing two novel transmission and detection schemes, we derive the achievable rates corresponding to these proposed C-NOMA and C-RSMA schemes. Then, we formulate two optimization problems maximizing the minimum rate of two users for C-NOMA and C-RSMA schemes by considering the proportional fairness coefficient. These two optimization problems are non-convex, and hence we propose two low complexity power allocation algorithms to solve them. In these algorithms, we firstly apply the successive convex approximation (SCA) method to transform the original non-convex problems into a geometric programming (GP) problem. The GP problems are well-known problems \cite{boyd2004convex} that can be transformed into convex problems. We prove that the proposed efficient algorithms are guaranteed to converge. Finally, we derive the asymptotic outage probability of the proposed C-NOMA and C-RSMA schemes. Also, we prove that the proposed cooperative schemes achieve  a diversity order of two. Simulation results prove the superiority of the proposed C-NOMA and C-RSMA schemes rather than C-OMA and non-cooperative schemes in terms of achievable rate and outage probability.\par 
In this paper, we propose two novel transmission and detection schemes for C-NOMA and C-RSMA that can be completed in two mini-slots. Note that in the air interface of 5G new-radio (NR), the mini-slots have been considered for URLLC applications \cite{NR_magazine}. Indeed, a mini-slot consists of a smaller number of symbols compared with a slot, and therefore they cause a lower latency. 
Note that for URLLC scenarios, both low latency and high-reliability communication is required. In this paper, in order to reach high reliability and increase diversity, we propose two novel detection methods for the C-NOMA and C-RSMA schemes. 
 \par
 Our system model can be generalized into a multi-user system with $K$ users by dividing the $K$ users into $\frac{K}{2}$ groups with two users inside each group. Then we can easily apply the proposed transmission and detection schemes for the C-NOMA and C-RSMA schemes for each group. In order to generalize our formulated optimization problems and their solutions so that they can be applied into this multi-user case, we just need to define user paring coefficients \cite{Mingzhi_journal,uplink_pairing}, and pair users based on the channel quality between them. Then, the proposed algorithms for power allocations in the manuscript can be applied into each group. Note that these groups must be distinguished by different time/frequency resource blocks or different beams. In order to serve each pair by one beam, we can consider that the BS is equipped with  $M>\frac{K}{2}$ antennas and creates  $\frac{K}{2}$ beams. Each of these beams is responsible to serve one pair of users, and within each of these $\frac{K}{2}$ beams, the proposed C-NOMA and C-RSMA schemes can be applied. \\
 The main contributions of this paper are summarized as follows:
\begin{itemize}
 \item \textbf{C-NOMA scheme:} In order to increase spectral efficiency, we propose a novel cooperative non-orthogonal transmission and detection scheme. At the first mini-slot of the proposed C-NOMA scheme, each user transmits its own signal in non-orthogonal mode, and at the second mini-slot, users relay the other user's message with AF protocol again at the same resource block. Then we propose a detection scheme based on MRC and SIC at the BS to decode the received signals from two mini-slots.
    \item \textbf{C-RSMA scheme:} In this rate-splitting-based scheme, the message of each user is divided into two streams. Then, a combination of these two streams at each user is transmitted and relayed in non-orthogonal mode at the first and second mini-slots, respectively. We propose a detection scheme at the BS for this C-RSMA scheme in which the streams of users are decoded alternately. The MRC and SIC are applied to decode the streams of each user in the proposed detection scheme. Finally, the original messages of each user are reconstructed at the BS by combining the two decoded streams for that user.
    
    \item \textbf{Achievable rates and Power allocation:} We derive the achievable rates corresponding to the proposed cooperative schemes. Then, we formulate two optimization problems in order to maximize the minimum rate of two users for C-NOMA and C-RSMA schemes by considering the proportional fairness coefficient. We propose two low complexity power allocation algorithms based on SCA and GP to solve these non-convex problems. We prove that the proposed efficient algorithms are guaranteed to converge.
    \item \textbf{Outage probability and diversity order:} We derive the asymptotic outage probability of the proposed C-NOMA and C-RSMA schemes, and we prove that the proposed cooperative schemes achieve  a diversity order of two.
\end{itemize}

\subsection{Organization}
The remainder of this paper is organized as follows. Section II presents the system model. Section III presents the proposed detection schemes and derived achievable rates for C-NOMA and C-RSMA. Section IV provides the formulated optimization problems for C-NOMA and C-RSMA schemes. In Section V, we derive the outage probability and diversity order of the proposed schemes. Section VI provides simulation
results to validate the performance of the proposed schemes. Finally, Section VII concludes the paper.

% The very first letter is a 2 line initial drop letter followed
% by the rest of the first word in caps.
%
% form to use if the first word consists of a single letter:
% \IEEEPARstart{A}{demo} file is ....
%
% form to use if you need the single drop letter followed by
% normal text (unknown if ever used by the IEEE):
% \IEEEPARstart{A}{}demo file is ....
%
% Some journals put the first two words in caps:
% \IEEEPARstart{T}{his demo} file is ....
%
% Here we have the typical use of a "T" for an initial drop letter
% and "HIS" in caps to complete the first word.

% You must have at least 2 lines in the paragraph with the drop letter
% (should never be an issue)

%\hfill mds

%\hfill August 26, 2015

\section{System Model}
We consider an uplink communication in which two users 
%\footnote{Our model can be generalized into a multi-user system with $K$ users. For this end, we divide the $K$ users into $K/2$ groups with two users inside each group, and then we apply the proposed transceiver and power allocation algorithms for the C-NOMA and C-RSMA schemes for each pair. Note that these groups must be distinguished by different frequency resource blocks.} 
cooperate to transmit their messages to the BS as depicted in Fig. \ref{system-model}. At the first mini-slot, each user transmits its message to the BS. We assume that there is a strong link between two users so that two users can receive the message from each other in full-duplex mode. At the second mini-slot, each user relays the received message of the other user in the previous min-slot with AF protocol. Note that in AF relaying, each user just amplifies and retransmits the received signal of the other user. The BS, which is equipped with a single antenna, utilizes the received signals of two users at two mini-slots and decodes their messages. We explain the proposed transmission and detection schemes for the users and BS in the next section. \par
In this paper, the channel power gains of user 1 to the BS and user 2 to the BS are indicated by $h_1$ and $h_2$, respectively. Also, we assume that the channel reciprocity holds for the links between user 1 and user 2, and we show the channel power gain of these links by $h_3$. Note that we investigate the performance of our schemes in the range of the coherence time and coherence bandwidth interval, and hence we consider the same channel power gains for two mini-slots and different subcarriers. Also, note that $h_k$ for $k\in \{1, 2, 3\}$ includes both the path loss and multipath fading, and hence we have $h_k=g_k\beta_0d_k^{-\alpha}$ for $k\in \{1, 2, 3\}$. In this equation, $g_k$ stands for the multipath fading component of channel power gain which has an exponential distribution with the mean value of $\lambda_{g_k}=1$. Also, $\beta_0d_k^{-\alpha}$ indicates the path loss component of channel power gain in which $d$ indicates the distance between nodes, $\alpha$ is the path loss exponent, and $\beta_{0}$ denotes the channel power at the reference distance $d_{0}=1~\mathrm{m}$. We consider independent additive white Gaussian noise (AWGN) with the distribution $CN(0,\sigma_{k}^{2})$ for $k \in\{1, 2, BS\}$ in which $\sigma_{k}^{2}$ shows the variance of the noise for the node $k$.  $\bar{P}_{k}$ for $k \in\{1, 2\}$  indicates the average transmit power at each mini-slot for each user $k \in\{1, 2\}$.\par

%Hereafter, we show the BS, user 1, and user 2 with subscripts s, 1, and 2, respectively. \par

\begin{figure}[t]
\centering
%\vspace{.01\linewidth}
\begin{minipage}{.44\linewidth}
\vspace{1.2cm}
  \includegraphics[width=\linewidth]{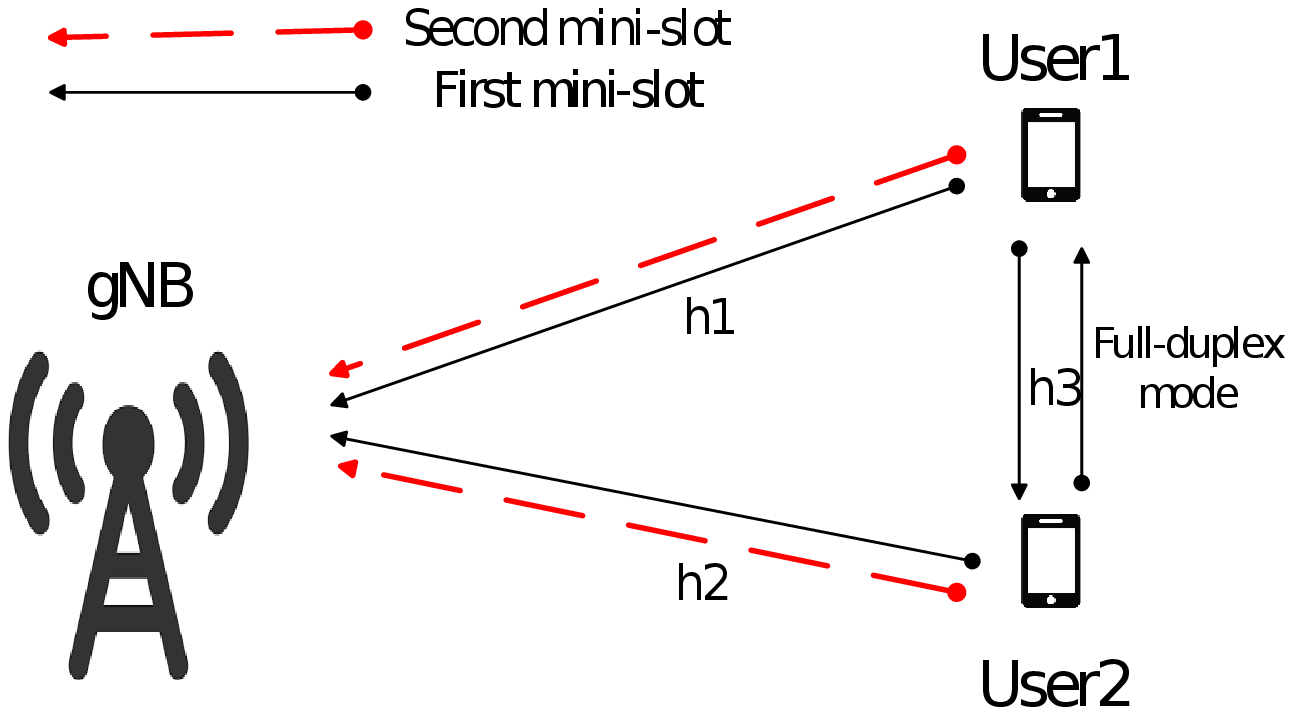}
  \captionof{figure}{System model for proposed user cooperation scheme at uplink. At the first mini-slot, each user transmits its signal and receives the transmitted signal of the other user in full-duplex mode. At the second  mini-slot, each user relays the other user's message with AF protocol. }
  %In order to achieve spectral efficiency, users transmit their signals in non-orthogonal mode in both mini-slots.}
  \label{system-model}
\end{minipage}
 \hspace{.02\linewidth}
\begin{minipage}{.52\linewidth}
  \includegraphics[width=\linewidth]{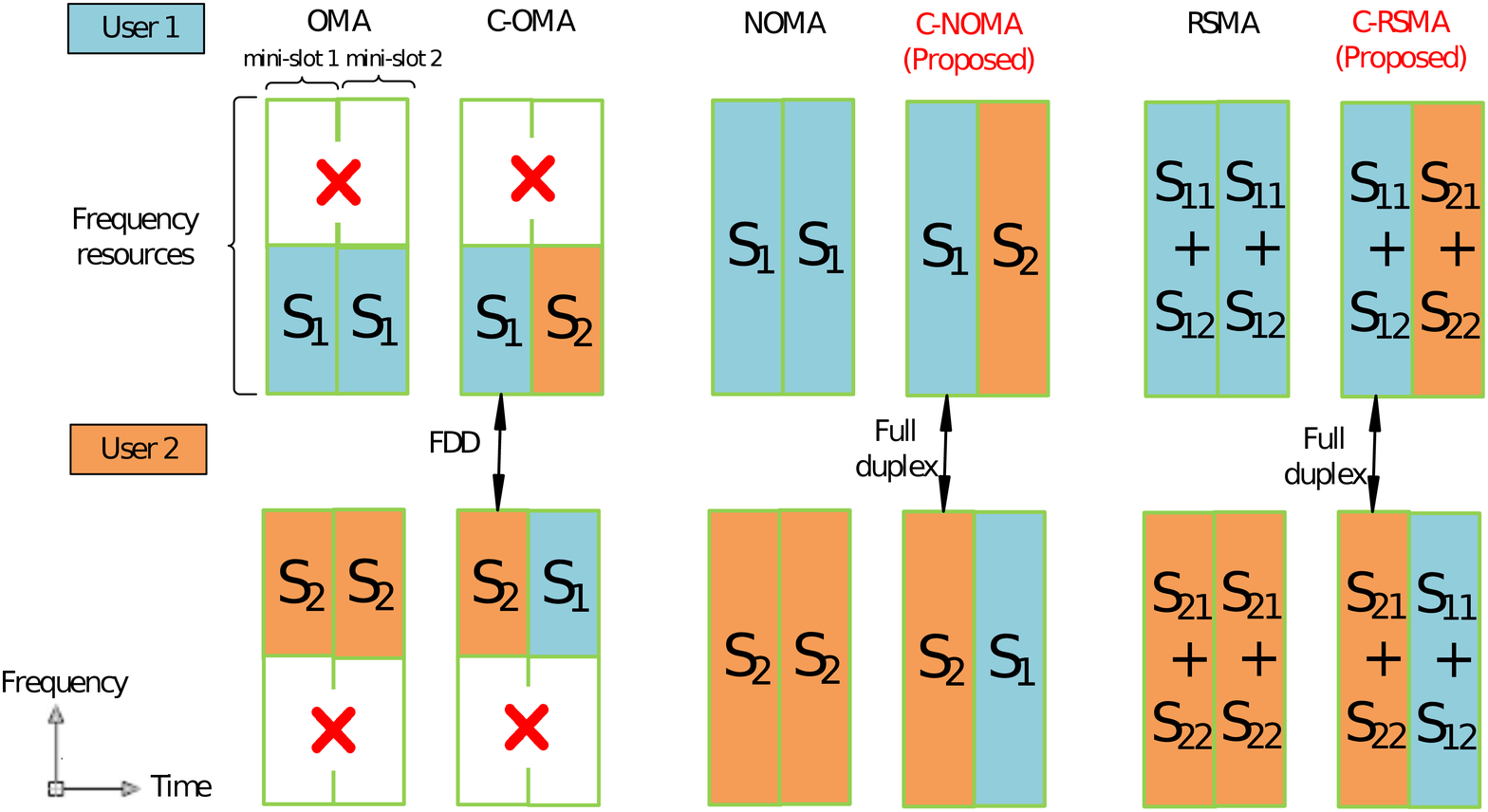}
  \captionof{figure}{Conventional and the proposed transmission schemes at each user with OMA, C-OMA, NOMA, C-NOMA, RSMA, and C-RSMA. In the first mini-slot of C-NOMA and C-RSMA schemes, the communication is performed in the full-duplex mode.}
  \label{frame}
\end{minipage}
\end{figure}

\section{Achievable Rates for C-NOMA and C-RSMA}
In this section, we propose transmission and detection schemes for two cooperative schemes and derive the achievable rates for them.
We can see the transmission scheme at each user for OMA, C-OMA, NOMA, C-NOMA, RSMA, and C-RSMA schemes in Fig. \ref{frame}. In this figure, horizontal and vertical axes indicate time and frequency resources, respectively. Also, for the schemes that do not apply RS, i.e., OMA, C-OMA, NOMA, C-NOMA, $s_i$ shows the symbol of user $i$ for $i\in\{1, 2\}$. For RSMA and C-RSMA schemes, $s_{i1}$ and $s_{i2}$ for $i\in\{1, 2\}$ show two streams of the user $i$ after splitting. Note that for cooperative schemes, we need to complete the whole transmission in two mini-slots. On the other hand, for orthogonal schemes, two frequency resource blocks are required for two users to send their data. Therefore, as one can see in Fig. \ref{frame}, in order to have a fair comparison among different multiple access schemes, we consider two mini-slots and two frequency resource blocks (four resource blocks in total) for all of the schemes. This means that the achievable rates for all schemes are for the same number of resource blocks, and all schemes have the same latency of two mini-slots. \par
In C-OMA, at the first mini-slot, each user transmits its own message to the BS in the orthogonal frequency resources. This means that the communication between two users is performed in the frequency division duplexing (FDD) mode at the C-OMA scheme. However, in the proposed transmission schemes for C-NOMA and C-RSMA, at the first mini-slot, two users transmit their own messages in the same frequency resources. This means that the communication between two users at the first mini-slot is performed in the full-duplex mode at the C-NOMA and C-RSMA schemes, and hence each user receives the message of another user at the first mini-slot. In C-NOMA and C-RSMA schemes, at the second mini-slot, users relay the other user's message with AF protocol again at the same resource block. In this paper, without loss of generality, we assume that user 1 has a better channel condition than user 2, i.e., $h_1>h_2$. Also, the notation $i'$ for the index of users means the other user than user $i$.

\subsection{C-NOMA}
% We can see the transmission scheme at each user in Fig. \ref{frame} for NOMA and C-NOMA schemes. In the uplink NOMA scheme, users transmit their messages at the same time and frequency resources.  Hence, the received signal at the BS equals $y_{BS}=\sqrt{h_1P_1}s_1+\sqrt{h_2P_2}s_2+z_{BS}$. We assume that user 1 has better channel condition than user 2, i.e., $h_1>h_2$. Hence, at first, the BS decodes the message of user 1 considering the received signal of user 2 as interference. As a consequence, the achievable rate of user 1 equals $R_1^{\mathrm{NOMA}}=\log_2(1+\frac{h_1P_1}{h_2P_2+\sigma_{BS}^2})$. Then, the BS applies SIC and subtracts the signal of user 1 from its received signal $y_{BS}$. Finally, the BS decodes the message of user 2. The achievable rate for this user is given by $R_2^{\mathrm{NOMA}}=\log_2(1+\frac{h_2P_2}{\sigma_{BS}^2})$. By defining the normalized power coefficients of user $i$ as $p_{i}=\frac{P_{i}}{\bar{P}_{i}}$, and the SNR of user $i$ as
 %   $\gamma_i=\frac{h_i\bar{P}_i}{\sigma_{BS}^2}$, the achievable rate with non-cooperative NOMA can be written as
 %\begin{equation}
 %   R_1^{\mathrm{NOMA}}=\log_2(1+\frac{p_1\gamma_1}{p_2\gamma_2+1}),~ R_2^{\mathrm{NOMA}}=\log_2(1+p_2\gamma_2).
%\end{equation}
%\par
\par In the following proposition, the achievable rates of each user with the C-NOMA scheme are derived.
\begin{proposition}\label{proposition-achievable-rate-C-NOMA}
In a two-user MAC with a strong link between two users, applying C-NOMA leads to the following achievable rates for each user
\small
\begin{equation}\label{r1_cnoma}
  R_1^{\mathrm{C-NOMA}}=\frac{1}{2}\log_2(1+\frac{\gamma_1p_1^1}{\gamma_2p_2^1+1}+\frac{\gamma_2p_2^2}{\gamma_1p_1^2+1}),  
\end{equation}
and
\begin{equation}\label{r2_cnoma}
 R_2^{\mathrm{C-NOMA}}=\frac{1}{2}\log_2(1+\gamma_1p_1^2+\gamma_2p_2^1),  
\end{equation}
\normalsize
where $\gamma_i=\frac{h_i\bar{P}_{i}}{\sigma_{BS}^2}$ for $i\in\{1, 2\}$ is the SNR of each user, and $p_{i}^j$ for $i\in\{1, 2\}$ and $j\in\{1, 2\}$ shows the allocated power for user $i$ at mini-slot $j$.
\end{proposition}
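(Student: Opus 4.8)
The plan is to derive the two rates in three stages: first write the end-to-end signal model over the two mini-slots, then reduce it to two equivalent unit-noise scalar observations whose coefficients are governed by the products $\gamma_i p_i^j$, and finally feed these observations to the proposed MRC--SIC detector and read off $R_1^{\mathrm{C-NOMA}}$ and $R_2^{\mathrm{C-NOMA}}$.

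First I would set up the received signals. At mini-slot $1$ the BS observes the non-orthogonal superposition $y_{\mathrm{BS}}^1=\sqrt{h_1\bar P_1 p_1^1}\,s_1+\sqrt{h_2\bar P_2 p_2^1}\,s_2+n_{\mathrm{BS}}^1$, while each user, after cancelling its own known symbol in full-duplex mode, receives the partner's symbol over the inter-user channel, e.g. $y_i^1=\sqrt{h_3\bar P_{i'} p_{i'}^1}\,s_{i'}+n_i^1$. At mini-slot $2$ user $i$ amplifies and forwards $y_i^1$ with a gain $G_i$ fixed by the retransmit-power budget $\bar P_i p_i^2$, i.e. $G_i^2=\bar P_i p_i^2/(h_3\bar P_{i'} p_{i'}^1+\sigma_i^2)$. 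Substituting the two relays into $y_{\mathrm{BS}}^2=\sqrt{h_1}\,G_1 y_1^1+\sqrt{h_2}\,G_2 y_2^1+n_{\mathrm{BS}}^2$ produces an observation in which $s_1$ arrives through user $2$'s relay (channel $h_2$) and $s_2$ through user $1$'s relay (channel $h_1$), each scaled by a factor of the form $\sqrt{h_3\bar P_{i'} p_{i'}^1}/\sqrt{h_3\bar P_{i'} p_{i'}^1+\sigma_i^2}$, plus two forwarded user-noise terms of variance $h_i G_i^2\sigma_i^2$.

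I expect this amplify-and-forward reduction to be the main obstacle, since it requires tracking the gain normalization and then invoking the strong inter-user link hypothesis to clean it up. Under $h_3\bar P_{i'} p_{i'}^1\gg\sigma_i^2$ the signal factor tends to unity, so the effective coefficients collapse to $\sqrt{h_2\bar P_2 p_2^2}$ for $s_1$ and $\sqrt{h_1\bar P_1 p_1^2}$ for $s_2$, while the forwarded-noise variances vanish and are absorbed into $n_{\mathrm{BS}}^2$. Dividing both observations by $\sigma_{\mathrm{BS}}$ and using $\gamma_i=h_i\bar P_i/\sigma_{\mathrm{BS}}^2$ then yields the normalized pair $\tilde y_{\mathrm{BS}}^1=\sqrt{\gamma_1 p_1^1}\,s_1+\sqrt{\gamma_2 p_2^1}\,s_2+\tilde n^1$ and $\tilde y_{\mathrm{BS}}^2=\sqrt{\gamma_2 p_2^2}\,s_1+\sqrt{\gamma_1 p_1^2}\,s_2+\tilde n^2$ with $\tilde n^j\sim CN(0,1)$.

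Finally I would apply the detector. Because user $1$ is the strong user ($h_1>h_2$), decode $s_1$ first while treating $s_2$ as Gaussian interference: the two branches carry $s_1$ with powers $\gamma_1 p_1^1$ and $\gamma_2 p_2^2$ against interference-plus-noise powers $\gamma_2 p_2^1+1$ and $\gamma_1 p_1^2+1$, and since MRC, modeling each branch's interference-plus-noise as independent, adds the per-branch SINRs, the combined SINR is exactly the bracketed term of \eqref{r1_cnoma}. After SIC cancels $s_1$ from both observations, $s_2$ is decoded from interference-free branches with powers $\gamma_2 p_2^1$ and $\gamma_1 p_1^2$ over unit noise, whose MRC sum $\gamma_2 p_2^1+\gamma_1 p_1^2$ matches \eqref{r2_cnoma}. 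Inserting each combined SINR into $\tfrac12\log_2(1+\cdot)$ --- where the prefactor $\tfrac12$ accounts for conveying one symbol per user over two mini-slots --- completes the derivation.
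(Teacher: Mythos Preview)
Your proposal is correct and follows essentially the same approach as the paper: you set up the two mini-slot observations, invoke the strong inter-user link to reduce the AF output to $y_{\mathrm{BS}}^2=\sqrt{h_1P_1^2}\,s_2+\sqrt{h_2P_2^2}\,s_1+z_{\mathrm{BS}}$, and then run MRC on the two branches for $s_1$ (treating $s_2$ as interference) followed by SIC and MRC for $s_2$, exactly as in the paper's detector. The only cosmetic difference is that you normalize by $\sigma_{\mathrm{BS}}$ up front to work directly in the $\gamma_i p_i^j$ variables, whereas the paper carries $h_iP_i^j/\sigma_{\mathrm{BS}}^2$ through and substitutes $\gamma_i$ at the end.
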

\begin{proof}
In the proposed transmission scheme for C-NOMA (Fig. \ref{frame}), at the first mini-slot, two users transmit their own messages in the same frequency resources. The received signal at the BS in the first mini-slot equals $y_{BS}^1=\sqrt{h_1P_1^1}s_1+\sqrt{h_2P_2^1}s_2+z_{BS}$ in which $P_{i}^j$ (for $i\in\{1, 2\}$ and $j\in\{1, 2\}$) shows the allocated power for user $i$ at mini-slot $j$, $s_i$ ($E\{|s_{i}|^{2}\}=1$) is the transmitted symbol from user $i$, and $z_{BS}$ is the AWGN noise at the BS. At the same time, each user $i$ receives the transmitted signal of the other user $i'$ as $y_i=\sqrt{h_3P_{i'}^1}s_{i'}+z_i$ for $i\in\{1, 2\}$. Note that each user is aware of its own transmitted signal, and hence we assume that each user can apply perfect self-interference cancellation. At the second mini-slot, each user amplifies the received signal from the other user with the amplification gain $\rho_i=\frac{P_i^2}{h_3P_{i'}^1+\sigma_i^2}$ \cite{patel2006statistical}. Then, each user forwards the amplified signal $\sqrt{\rho_i}y_{i}$ to the BS in the same frequency resources. Assuming a strong link between two users, the received signal at the BS in the second mini-slot is given by $y_{BS}^2=\sqrt{h_1\rho_1}y_{1}+\sqrt{h_2\rho_2}y_{2}+z_{BS}=\sqrt{h_1P_1^2}s_2+\sqrt{h_2P_2^2}s_1+z_{BS}$. Finally, the proposed detection scheme in Fig. \ref{detection-scheme-C-NOMA} is applied in order to decode the messages of each user according to the received signals in two time-slots at the BS. At this detection scheme, first, the MRC is applied for the received signals at two mini-slot in order to detect the message of user 1. The achievable rate of this user is given by $R_1^{\mathrm{C-NOMA}}=\frac{1}{2}\log_2(1+\frac{h_1P_1^1}{h_2P_2^1+\sigma_{BS}^2}+\frac{h_2P_2^2}{h_1P_1^2+\sigma_{BS}^2})$. Then, the interference of user 1 is subtracted from the received signals at two mini-slots by applying SIC. Finally, we again apply the MRC for these two new signals at two mini-slot in order to detect the message of user 2. The achievable rate of this user is given by $R_2^{\mathrm{C-NOMA}}=\frac{1}{2}\log_2(1+\frac{h_1P_1^2}{\sigma_{BS}^2}+\frac{h_2P_2^1}{\sigma_{BS}^2})$. Note that the average transmit power at each user must satisfy $P_i^1+P_i^2=\bar{P}_{i}$ for $i\in\{1, 2\}$. By defining the normalized powers at each user as $p_i^1=\frac{P_i^1}{\bar{P}_{i}}$ and $p_i^2=\frac{P_i^2}{\bar{P}_{i}}$ and the SNR for each user as $\gamma_i=\frac{h_i\bar{P}_{i}}{\sigma_{BS}^2}$, the achievable rates for each user can be derived as (\ref{r1_cnoma}) and (\ref{r2_cnoma}), and the proof is completed. 
\end{proof}

\begin{figure}[t]
\centering
%\vspace{.01\linewidth}
\begin{minipage}{.47\linewidth}
\vspace{0.4cm}
  \includegraphics[width=\linewidth]{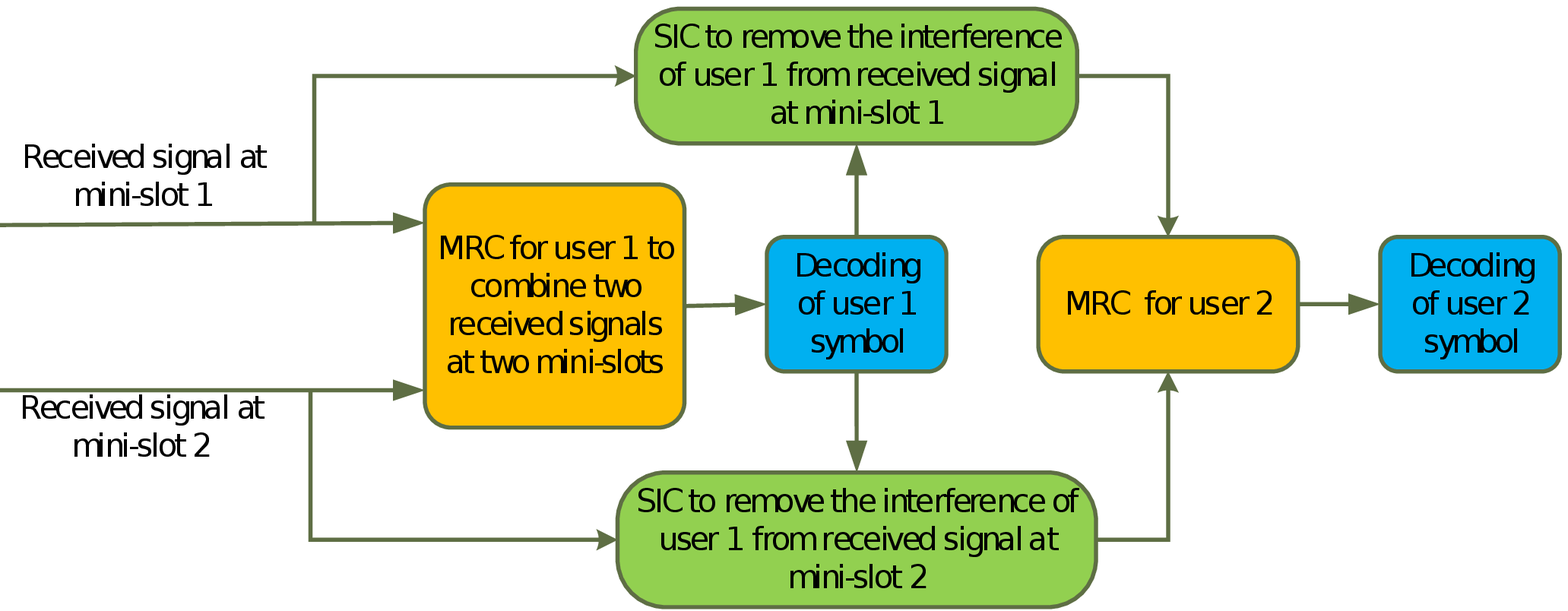}
  \captionof{figure}{Proposed detection scheme for the C-NOMA scheme to decode the messages of two users according to the received signals in two mini-slots at the BS.}
  %In order to achieve spectral efficiency, users transmit their signals in non-orthogonal mode in both mini-slots.}
  \label{detection-scheme-C-NOMA}
\end{minipage}
 \hspace{.02\linewidth}
\begin{minipage}{0.47\linewidth}
  \includegraphics[width=\linewidth]{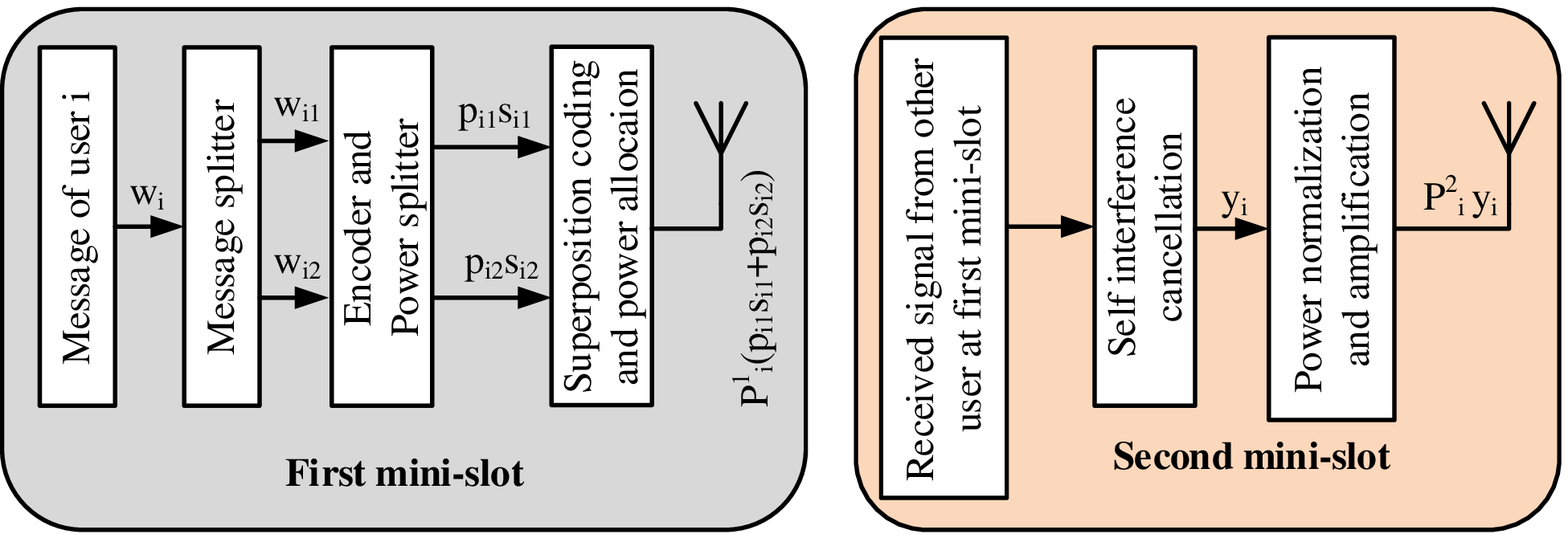}
  \captionof{figure}{Proposed transmission structure at user $i$ for the C-RSMA scheme in mini-slot 1 and mini-slot 2.}
  \label{CRSMA_mini_slots}
\end{minipage}
\end{figure}

\subsection{C-RSMA}
 \par In the following proposition, the achievable rate of each user with the C-RSMA scheme is derived.
\begin{proposition}\label{proposition-achievable-rate-C-RSMA}
In a two-user MAC with a strong link between two users, applying the C-RSMA scheme leads to the following achievable rates for each user
\small
\begin{equation}\label{r1_crsma}
\begin{split}
     R_1^{\mathrm{C-RSMA}}&=\frac{1}{2}\log_2(1+\frac{p_1^1p_{11}\gamma_1}{p_1^1p_{12}\gamma_1+p_2^1p_{21}\gamma_2+p_2^1p_{22}\gamma_2+1}+\frac{p_2^2p_{11}\gamma_2}{p_2^2p_{12}\gamma_2+p_1^2p_{21}\gamma_1+p_1^2p_{22}\gamma_1+1})\\&+\frac{1}{2}\log_2(1+\frac{p_1^1p_{12}\gamma_1}{p_2^1p_{22}\gamma_2+1}+\frac{p_2^2p_{12}\gamma_2}{p_1^2p_{22}\gamma_1+1}),  
\end{split}
\end{equation}
 and
 \begin{equation}\label{r2_crsma}
 \begin{split}
     R_2^{\mathrm{C-RSMA}}&=\frac{1}{2}\log_2(1+\frac{p_2^1p_{21}\gamma_2}{p_2^1p_{22}\gamma_2+p_1^1p_{12}\gamma_1+1}+\frac{p_1^2p_{21}\gamma_1}{p_1^2p_{22}\gamma_1+p_2^2p_{12}\gamma_2+1})\\&+\frac{1}{2}\log_2(1+p_2^1p_{22}\gamma_2+p_1^2p_{22}\gamma_1),
 \end{split}
 \end{equation}
 \normalsize
where $\gamma_i=\frac{h_i\bar{P}_{i}}{\sigma_{BS}^2}$ for $i\in\{1, 2\}$ is the SNR of each user, and $p_{i}^j$ for $i\in\{1, 2\}$ and $j\in\{1, 2\}$ shows the allocated power for user $i$ at mini-slot $j$. Also, $p_{i1}$ and $p_{i2}$ indicate the allocated powers for the  symbol 1 and symbol 2 of user $i$, respectively.
\end{proposition}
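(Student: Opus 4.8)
The plan is to follow the same template as the proof of Proposition \ref{proposition-achievable-rate-C-NOMA}, augmented with the rate-splitting structure shown in Fig. \ref{CRSMA_mini_slots}. First I would write down the transmitted signals. In the first mini-slot each user $i$ sends a superposition of its two streams, $x_i^1=\sqrt{P_i^1 p_{i1}}\,s_{i1}+\sqrt{P_i^1 p_{i2}}\,s_{i2}$, where $p_{i1}$ and $p_{i2}$ are the intra-user power splits, so that the BS observes $y_{BS}^1=\sqrt{h_1}\,x_1^1+\sqrt{h_2}\,x_2^1+z_{BS}$ and each user receives the superposed streams of the other, $y_i=\sqrt{h_3}\,x_{i'}^1+z_i$, after perfect self-interference cancellation.

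Second, I would reproduce the AF step of the C-NOMA proof: each user amplifies with gain $\rho_i=\frac{P_i^2}{h_3 P_{i'}^1+\sigma_i^2}$ and forwards $\sqrt{\rho_i}\,y_i$, so that under the strong inter-user link the relayed term reproduces the other user's two streams with their individual splits preserved, giving $y_{BS}^2=\sqrt{h_1 P_1^2 p_{21}}\,s_{21}+\sqrt{h_1 P_1^2 p_{22}}\,s_{22}+\sqrt{h_2 P_2^2 p_{11}}\,s_{11}+\sqrt{h_2 P_2^2 p_{12}}\,s_{12}+z_{BS}$. The key point I would verify here is that the per-stream scaling $\sqrt{h_i\rho_i h_3 P_{i'}^1 p_{i'k}}$ collapses to $\sqrt{h_i P_i^2 p_{i'k}}$ whenever $h_3 P_{i'}^1\gg\sigma_i^2$, so the power split carried over the relay is exactly $p_{i'k}$, and user $1$'s relay carries user $2$'s two streams while user $2$'s relay carries user $1$'s two streams.

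Third, I would fix the decoding order that matches the claimed expressions, namely $s_{11}\to s_{21}\to s_{12}\to s_{22}$, and decode each stream by MRC-combining its two copies so that the combined SINR is the sum of the two per-mini-slot SINRs, followed by SIC of that stream from both observations. Tracking the residual interference stage by stage gives: $s_{11}$ sees $\{s_{12},s_{21},s_{22}\}$ in both mini-slots, which reproduces the first logarithm of (\ref{r1_crsma}); after cancelling $s_{11}$, stream $s_{21}$ sees only $\{s_{12},s_{22}\}$, yielding the first logarithm of (\ref{r2_crsma}); after cancelling $s_{21}$ as well, $s_{12}$ sees only $\{s_{22}\}$, yielding the second logarithm of (\ref{r1_crsma}); and $s_{22}$, decoded last, sees no interference, yielding the second logarithm of (\ref{r2_crsma}). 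Summing the two MRC rates of the streams belonging to the same user produces $R_i=R_{si1}+R_{si2}$, and after normalizing via $p_i^j=P_i^j/\bar{P}_i$ and $\gamma_i=h_i\bar{P}_i/\sigma_{BS}^2$ these reduce to (\ref{r1_crsma}) and (\ref{r2_crsma}).

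The main obstacle I expect is purely the bookkeeping: correctly accounting, at each of the four SIC stages, for which streams remain as interference in each mini-slot \emph{separately}, since the interference observed in the second mini-slot arrives through the relayed (hence power-swapped) copies, and then matching each numerator and denominator in (\ref{r1_crsma})--(\ref{r2_crsma}) to its asserted decoding stage. The conceptual ingredients (superposition transmission, AF relaying, MRC combining, and SIC) are all inherited from the proof of Proposition \ref{proposition-achievable-rate-C-NOMA}; the only genuinely new element is handling two streams per user together with the alternating decoding order.
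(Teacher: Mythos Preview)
Your proposal is correct and follows essentially the same approach as the paper's own proof: write the superposed transmissions in each mini-slot, use the strong inter-user link to simplify the AF-relayed second mini-slot signal, decode in the order $s_{11}\to s_{21}\to s_{12}\to s_{22}$ via MRC followed by SIC, read off the four per-stream rates, and finally normalize to $p_i^j$ and $\gamma_i$. The bookkeeping concern you flag is exactly the only subtlety the paper handles, and your identification of which streams remain as interference at each stage matches the paper's derivation term for term.
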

\begin{proof}

In the  proposed transmission scheme for C-RSMA in Fig. \ref{CRSMA_mini_slots}, at the first mini-slot, two users transmit their messages in two streams and in the same frequency resources. Note that in the rate-splitting method, each user $i$ divides its message $w_i$ into two messages $w_{i1}$ and $w_{i2}$ for $i\in\{1, 2\}$. These messages then are encoded into two streams $s_{i1}$ and $s_{i2}$ for $i\in\{1, 2\}$ with split powers $p_{i1}$ and $p_{i2}$, respectively. Note that $p_{i1}$ and $p_{i2}$ indicate the split powers for the  stream 1 and stream 2 of user $i$, respectively, and they must satisfy $p_{i1}+p_{i2}=1$ for $i\in\{1, 2\}$. Finally, utilizing superposition coding, these two signals are linearly combined and transmitted with the power $P_i^1$. Note that $P_{i}^j$ (for $i\in\{1, 2\}$ and $j\in\{1, 2\}$) shows the allocated power for user $i$ at mini-slot $j$. The received signal at the BS in the first mini-slot equals $y_{BS}^1=\sqrt{h_1P_1^1}(\sqrt{p_{11}}s_{11}+\sqrt{p_{12}}s_{12})+\sqrt{h_2P_2^1}(\sqrt{p_{21}}s_{21}+\sqrt{p_{22}}s_{22})+z_{BS}$. At the same time, user $i$ receives the transmitted signal of the other user $i'$. \par
At the second mini-slot, after applying self interference cancellation at each user, the received signal from the other user is given by $y_i=\sqrt{h_3P_{i'}^1}(\sqrt{p_{i'1}}s_{i'1}+\sqrt{p_{i'2}}s_{i'2})+z_{i}$ for $i\in\{1, 2\}$. Then each user normalizes and amplifies the received signal with the gain $\rho_i=\frac{P_i^2}{h_3P_{i'}^1+\sigma_i^2}$. Note that we must have $P_i^1+P_i^2=\bar{P}_{i}$ for $i\in\{1, 2\}$. Finally, users forward the amplified signals $\sqrt{\rho_i}y_i$ to the BS in the same frequency resources.  Assuming a strong link between two users, the received signal at the BS in the second  mini-slot is given by $y_{BS}^2=\sqrt{h_1\rho_1}y_{1}+\sqrt{h_2\rho_2}y_{2}+z_{BS}=\sqrt{h_1P_1^2}(\sqrt{p_{21}}s_{21}+\sqrt{p_{22}}s_{22})+\sqrt{h_2P_2^2}(\sqrt{p_{11}}s_{11}+\sqrt{p_{12}}s_{12})+z_{BS}$. \par
Now, the proposed detection scheme in Fig. \ref{detection-scheme-C-RSMA} is applied in order to decode two symbols of each user according to the received signals at two mini-slots. We decode symbols of users alternately with the following order $s_{11},s_{21},s_{12}$, and $s_{22}$.  At the proposed detection scheme, firstly, the MRC is applied for the received signals at two mini-slot in order to detect $s_{11}$ with the achievable rate $R_{11}=\frac{1}{2}\log_2(1+\frac{h_1P_1^1p_{11}}{h_1P_1^1p_{12}+h_2P_2^1p_{21}+h_2P_2^1p_{22}+\sigma_{BS}^2}+\frac{h_2P_2^2p_{11}}{h_2P_2^2p_{12}+h_1P_1^2p_{21}+h_1P_1^2p_{22}+\sigma_{BS}^2})$. Then, the interference of $s_{11}$ is subtracted from the received signals by applying SIC, and we again perform MRC by combining these two new received signals in order to detect $s_{21}$. This lead to  achievable rate of $R_{21}=\frac{1}{2}\log_2(1+\frac{h_2P_2^1p_{21}}{h_2P_2^1p_{22}+h_1P_1^1p_{12}+\sigma_{BS}^2}+\frac{h_1P_1^2p_{21}}{h_1P_1^2p_{22}+h_2P_2^2p_{12}+\sigma_{BS}^2})$ for $s_{21}$. Next, we apply SIC to remove the interference of $s_{21}$. Now, by applying MRC, we can decode $s_{12}$  with the achievable rate $R_{12}=\frac{1}{2}\log_2(1+\frac{h_1P_1^1p_{12}}{h_2P_2^1p_{22}+\sigma_{BS}^2}+\frac{h_2P_2^2p_{12}}{h_1P_1^2p_{22}+\sigma_{BS}^2})$. Finally, by applying SIC to remove $s_{12}$, we can perform MRC for the new received signals from two  mini-slots in order to decode $s_{22}$ with the achievable rate $R_{22}=\frac{1}{2}\log_2(1+\frac{h_2P_2^1p_{22}}{\sigma_{BS}^2}+\frac{h_1P_1^2p_{22}}{\sigma_{BS}^2})$. Now, we can write $R_1^{\mathrm{C-RSMA}}=R_{11}+R_{12}$, and $R_2^{\mathrm{C-RSMA}}=R_{21}+R_{22}$. Note that the original message of each user is reconstructed at the BS by combining the two decoded streams for that user. We define the normalized powers at each user $i$ for the first and second mini-slots as $p_i^1=\frac{P_i^1}{\bar{P}_{i}}$ and $p_i^2=\frac{P_i^2}{\bar{P}_{i}}$, respectively.
Also, by defining the SNR for each user as $\gamma_i=\frac{h_i\bar{P}_{i}}{\sigma^2}$ for $i\in\{1, 2\}$, the achievable rates for each user can be derived as (\ref{r1_crsma}) and (\ref{r2_crsma}), and the proof is completed.
\end{proof}

\begin{figure}[!t]
  \centering
  \includegraphics[width=0.8\textwidth]{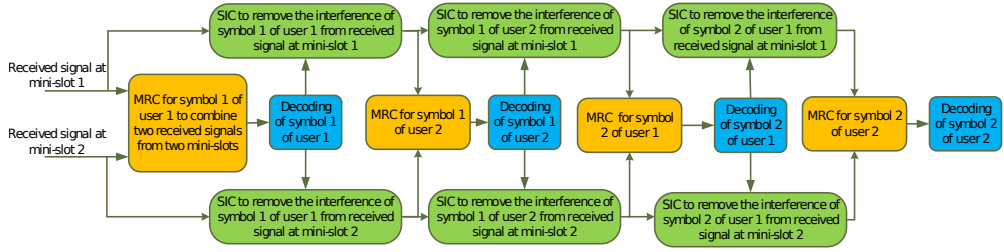}
  \caption{Proposed detection scheme for the C-RSMA scheme in order to decode the two symbols of each user according to the received signals in two mini-slots at the BS.}\label{detection-scheme-C-RSMA}
\end{figure}

The transmission complexity of the C-NOMA and C-RSMA schemes is more than C-OMA and non-cooperative schemes due to working in full-duplex mode in the first mini-slot. The transmission scheme of the C-RSMA is even more complex than C-NOMA as it requires splitting the messages of each user into two streams. Note that C-NOMA can be considered as a special case of C-RSMA in which there is no rate splitting at users. The detection complexity of the C-NOMA and C-RSMA schemes is more than C-OMA and non-cooperative schemes. In order to compare the complexity of the proposed detection methods for the C-NOMA and C-RSMA schemes, note that the detector of the C-NOMA scheme consists of two MRC and two SIC blocks, and the detector of the C-RSMA scheme consists of four MRC and six SIC blocks.

\section{Problem formulation for power allocation}
In this section, we formulate and solve two optimization problems in order to find optimal values for power allocation in C-NOMA, and C-RSMA schemes by maximizing the minimum rate of each user considering the proportional fairness coefficient. In this paper, we assume that perfect channel state information (CSI) is available at the BS. Therefore, based on this assumption and the formulation of max-min optimization problems for the uplink communication, we assume that the SIC can be performed perfectly.
\subsection{Power allocation for C-NOMA scheme}
In this subsection, we aim to find optimum power allocation $\bold{P}_i=[p_i^1,p_i^2]$  for $i\in\{1, 2\}$ that maximizes the minimum rate of each user in (\ref{r1_cnoma}) and (\ref{r2_cnoma}) with considering proportional fairness coefficient. 
 The optimization problem can be formulated as
%\begin{subequations}
\small
\begin{alignat}{2}
\text{(P1):~~~~    }&\underset{\bold{P}_1,\bold{P}_2}{\text{max}}~~~        && \min (R_1^{\mathrm{C-NOMA}},fR_2^{\mathrm{C-NOMA}})\label{eq:obj_fun}\\
&\text{s.t.} &      &   p_i^1+p_i^2\leq2, ~~\forall i\in\{1, 2\}, \label{eq:constraint-sum}\\
               &&& p_i^1>0,~ p_i^2>0,  \label{eq:constraint+}
\end{alignat}
\normalsize
where $f$ indicates the proportional fairness coefficient and regarding different rate requirements of users, we can consider different values for this parameter. Note that when $f=1$, this optimization problem reduces to the well-known mix-min problem. Constraint (\ref{eq:constraint-sum}) shows the total power consumption constraint in each user at two mini-slots. \par
The term $R_1^{\mathrm{C-NOMA}}$ in objective function at (\ref{eq:obj_fun}) is not concave with respect to power variables. Hence, (P1) is a non-convex optimization problem and can not be solved using standard convex optimization methods. In order to solve this problem, we propose the SCA-GP method in which we transform our optimization problem into a GP problem at each iteration of the SCA method. It has been proved \cite{boyd2004convex} that the GP problem can be transformed into a convex problem. In the SCA method, an approximation of the non-concave terms $R_1^{\mathrm{C-NOMA}}$ and $R_2^{\mathrm{C-NOMA}}$ is maximized iteratively.  Note that these approximations must be non-decreasing in iterations in order to guarantee the convergence of the SCA method. In the following lemma, we introduce a convex function that helps us to find an appropriate approximation.

\begin{lemma}\label{lemma1}
Consider $x$ and $y$ as two variables, and $a$, and $b$ as constants. The function  $f(x,y)=\ln(1+\frac{a}{x}+\frac{b}{y})$,
is a convex function with respect to $x$ and $y$.
\end{lemma}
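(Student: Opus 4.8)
The plan is to prove the statement directly by showing that the Hessian of $f$ is positive definite on the relevant domain $x>0,\,y>0$ with $a>0,\,b>0$ (the regime in which the lemma is later applied, where $a,b$ arise as products of powers and SNRs and $x,y$ as interference-plus-noise quantities); positive definiteness of the Hessian implies convexity. First I would abbreviate $u(x,y)=1+\tfrac{a}{x}+\tfrac{b}{y}>0$, so that $f=\ln u$, and record $u_x=-a/x^2$, $u_y=-b/y^2$, $u_{xx}=2a/x^3$, $u_{yy}=2b/y^3$, and $u_{xy}=0$. From $f=\ln u$ the Hessian $H$ of $f$ has entries $f_{xx}=(u_{xx}u-u_x^2)/u^2$, $f_{yy}=(u_{yy}u-u_y^2)/u^2$, and $f_{xy}=-u_xu_y/u^2$.

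Second, I would dispatch the two diagonal terms, which are immediate. For $f_{xx}>0$ it suffices that $u_{xx}u>u_x^2$; since $u\ge 1+\tfrac{a}{x}$ we get $u_{xx}u\ge\tfrac{2a}{x^3}\bigl(1+\tfrac{a}{x}\bigr)=\tfrac{2a}{x^3}+\tfrac{2a^2}{x^4}>\tfrac{a^2}{x^4}=u_x^2$, and the argument for $f_{yy}>0$ is symmetric.

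The crux is the determinant condition. Writing $\det H=f_{xx}f_{yy}-f_{xy}^2$ and clearing the common factor $u^{-4}$, the numerator is $u\bigl(u_{xx}u_{yy}\,u-u_{xx}u_y^2-u_x^2u_{yy}\bigr)$, since the pure $u_x^2u_y^2$ terms cancel. Substituting the derivatives, the bracket becomes $\tfrac{2ab}{x^3y^3}\bigl(2u-\tfrac{a}{x}-\tfrac{b}{y}\bigr)$, and the key simplification is that $\tfrac{a}{x}+\tfrac{b}{y}=u-1$ by the very definition of $u$, so $2u-\tfrac{a}{x}-\tfrac{b}{y}=u+1$. Hence $\det H=\dfrac{2ab\,(u+1)}{x^3y^3\,u^3}>0$, which together with $f_{xx}>0$ shows $H$ is positive definite and proves convexity. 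The only real obstacle is this algebraic collapse, and I expect it to go through cleanly once the $\tfrac{a}{x},\tfrac{b}{y}$ terms are grouped as above.

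As an independent check, and a shorter route I would record as a remark, convexity of $f=\ln u$ is exactly log-convexity of $u$. The constant $1$ is log-convex, and each of $\tfrac{a}{x}$ and $\tfrac{b}{y}$ is log-convex because $\ln(a/x)=\ln a-\ln x$ with $-\ln x$ convex; since a sum of log-convex functions is log-convex \cite{boyd2004convex}, $u=1+\tfrac{a}{x}+\tfrac{b}{y}$ is log-convex and the claim follows at once. I would keep the Hessian computation as the main proof and mention this observation as a cross-check consistent with the GP framework used later.
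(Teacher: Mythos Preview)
Your proof is correct and follows essentially the same route as the paper: compute the Hessian of $f$ and verify positive definiteness via Sylvester's criterion (positive $(1,1)$ entry and positive determinant), with your substitution $u=1+\tfrac{a}{x}+\tfrac{b}{y}$ yielding a slightly tidier bookkeeping than the paper's equivalent $u=xy+ay+bx$. The log-convexity argument you add as a remark---that $1$, $a/x$, $b/y$ are each log-convex and sums of log-convex functions are log-convex---is a genuinely shorter alternative not present in the paper and is worth keeping as a one-line proof.
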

\begin{proof}
See Appendix \ref{lemma 1}. 
\end{proof}
  In the following proposition, we utilize Lemma \ref{lemma1} in order to find an  approximation for $R_1^{\mathrm{C-NOMA}}$ and $R_2^{\mathrm{C-NOMA}}$. Note that in the $l^{\text{th}}$ iteration of SCA method, the power allocation coefficients of each user are indicated by $p_i^{1,l},~ p_i^{2,l}$ for  $i\in\{1, 2\}$. Also, the lower-bound (LB) approximated achievable rate for $R_1^{\mathrm{C-NOMA}}$ and $R_2^{\mathrm{C-NOMA}}$ at the iteration $l$ are shown by $R_{\text{lb},1}^{l,\mathrm{C-NOMA}}$ and $R_{\text{lb},2}^{l,\mathrm{C-NOMA}}$, respectively.
  
\begin{proposition}\label{prop-cnoma-opt}
For any given power allocation at the $l^{\text{th}}$ iteration, i.e., $[p_i^{1,l},p_i^{2,l}]$ for  $i\in\{1, 2\}$, one approximated non-decreasing lower-bound of the achievable rate of user 1 and user 2 in C-NOMA scheme and at the $(l+1)^{\text{th}}$ iteration of the SCA method is given by
\small
\begin{equation}
\begin{split}\label{rate-cnoma_lb}
R_{\text{lb},i}^{l+1,\mathrm{C-NOMA}}=\frac{1}{2}\log_2(1+\frac{1}{x_i^l}+\frac{1}{y_i^l})+\frac{1}{2}d_{i1}^l(x_i^{l+1}-x_i^l)+\frac{1}{2}d_{i2}^l(y_i^{l+1}-y_i^l),~i\in\{1, 2\},
\end{split}
\end{equation}
\normalsize
where $d_{i1}^l$ and $d_{i2}^l$  are partial derivatives as $d_{i1}^l=\frac{-1}{\ln{2}\big((x_i^l)^2+x_i^l+(\frac{x_i^l}{y_i^l})\big)}$ and $d_{i2}^l=\frac{-1}{\ln{2}\big((y_i^l)^2+y_i^l+(\frac{y_i^l}{x_i^l})\big)}$, and $x_1^l=\frac{\gamma_2p_2^{1,l}}{\gamma_1p_1^{1,l}}+\frac{1}{\gamma_1p_1^{1,l}}$, $y_1^l=\frac{\gamma_1p_1^{2,l}}{\gamma_2p_2^{2,l}}+\frac{1}{\gamma_2p_2^{2,l}}$, $x_2^l=\frac{1}{\gamma_1p_1^{2,l}}$, and $y_2^l=\frac{1}{\gamma_2p_2^{1,l}}$.
\end{proposition}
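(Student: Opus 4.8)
The plan is to reduce both rate expressions to the canonical form $\frac{1}{2}\log_2(1+\frac{1}{x}+\frac{1}{y})$ handled by Lemma \ref{lemma1}, and then to lower-bound this convex function by its tangent plane at the current iterate, which is exactly what SCA requires. First I would introduce the change of variables indicated in the statement. For user~1, setting $x_1=\frac{\gamma_2 p_2^1+1}{\gamma_1 p_1^1}$ and $y_1=\frac{\gamma_1 p_1^2+1}{\gamma_2 p_2^2}$ gives $\frac{1}{x_1}=\frac{\gamma_1 p_1^1}{\gamma_2 p_2^1+1}$ and $\frac{1}{y_1}=\frac{\gamma_2 p_2^2}{\gamma_1 p_1^2+1}$, so that $R_1^{\mathrm{C-NOMA}}=\frac{1}{2}\log_2(1+\frac{1}{x_1}+\frac{1}{y_1})$ reproduces (\ref{r1_cnoma}); for user~2, setting $x_2=\frac{1}{\gamma_1 p_1^2}$ and $y_2=\frac{1}{\gamma_2 p_2^1}$ turns (\ref{r2_cnoma}) into the same canonical form. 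Evaluating these at the power allocation of iteration $l$ gives precisely the $x_i^l,y_i^l$ listed in the statement.

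The second step is to invoke convexity. Since the change of base from $\ln$ to $\log_2$ is a positive rescaling and hence preserves convexity, Lemma \ref{lemma1} with $a=b=1$ shows that $R_i^{\mathrm{C-NOMA}}$, regarded as a function of $(x_i,y_i)$, is convex. A convex function always dominates its first-order Taylor expansion, so $R_i^{\mathrm{C-NOMA}} \ge R_i^{\mathrm{C-NOMA}}\big|_{(x_i^l,y_i^l)}+\partial_{x_i}R_i^{\mathrm{C-NOMA}}\big|_l\,(x_i-x_i^l)+\partial_{y_i}R_i^{\mathrm{C-NOMA}}\big|_l\,(y_i-y_i^l)$, with equality at the expansion point $(x_i^l,y_i^l)$. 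Substituting the next point $(x_i^{l+1},y_i^{l+1})$ and identifying the partial derivatives of $\frac{1}{2}\log_2(1+\frac{1}{x}+\frac{1}{y})$ with $\frac{1}{2}d_{i1}^l$ and $\frac{1}{2}d_{i2}^l$ reproduces (\ref{rate-cnoma_lb}). The derivative evaluation is routine: differentiating $\frac{1}{2}\log_2(1+\frac{1}{x}+\frac{1}{y})$ with respect to $x$ and clearing the $1+\frac{1}{x}+\frac{1}{y}$ denominator yields the stated $d_{i1}^l$, and the computation for $d_{i2}^l$ is symmetric.

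Finally I would establish the two properties that justify calling this a \emph{non-decreasing} lower bound inside SCA. Tightness holds because the two linear correction terms vanish when $(x_i^{l+1},y_i^{l+1})=(x_i^l,y_i^l)$, so $R_{\text{lb},i}^{l+1,\mathrm{C-NOMA}}=R_i^{\mathrm{C-NOMA}}$ at the expansion point; the global-underestimator property $R_{\text{lb},i}^{l+1,\mathrm{C-NOMA}}\le R_i^{\mathrm{C-NOMA}}$ everywhere is exactly the tangent inequality from convexity. These two together yield monotonicity of the SCA objective: the true objective at iteration $l+1$ is at least the surrogate objective at $l+1$, which (since the previous allocation is still feasible for the surrogate maximization) is at least the surrogate objective at $l$, which in turn equals the true objective at $l$.

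I expect the main subtlety to be bookkeeping the change of variables rather than any deep analysis. The convexity and the tangent-plane bound live in the auxiliary variables $(x_i,y_i)$, whereas the optimization is carried out over the powers $(p_i^1,p_i^2)$, so I must keep the bound expressed consistently and treat $(x_i^{l+1},y_i^{l+1})$ as the image of the next allocation; verifying the exact algebraic form and sign of $d_{i1}^l,d_{i2}^l$ after clearing denominators is the one place an error would silently destroy the tangency at the expansion point and hence the monotonicity guarantee.
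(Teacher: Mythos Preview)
Your proposal is correct and follows essentially the same route as the paper's own proof: introduce the auxiliary variables $(x_i,y_i)$ to put both rates in the form $\frac{1}{2}\log_2(1+\frac{1}{x}+\frac{1}{y})$, invoke Lemma~\ref{lemma1} (with $a=b=1$) for convexity, take the first-order Taylor expansion as the global affine minorant, and then run the standard SCA monotonicity chain $R_i^{l+1}\ge R_{\text{lb},i}^{l+1}\ge R_{\text{lb},i}^{l}=R_i^{l}$. Your explicit remark that the previous iterate remains feasible for the surrogate maximization is a nice clarification the paper leaves implicit.
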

\begin{proof}
See Appendix \ref{appendix-cnoma}.  
\end{proof}
This proposition shows that the achievable rates of user 1  ($R_1^{\mathrm{C-NOMA}}$) and user 2 ($R_2^{\mathrm{C-NOMA}}$) in (\ref{r1_cnoma}) and (\ref{r2_cnoma}) are lower-bounded by  $R_{\text{lb},1}^{l+1,\mathrm{C-NOMA}}$ and $R_{\text{lb},2}^{l+1,\mathrm{C-NOMA}}$ in (\ref{rate-cnoma_lb}), respectively. It then follows that the optimal value of (P1) is lower-bounded by the optimal value of the problem 
%\begin{subequations}
\small
\begin{alignat}{2}
\text{(P1.1):~~~~}&\underset{\bold{P}_1,\bold{P}_2}{\text{max}}~~~        && \min (R_{\text{lb},1}^{l+1,\mathrm{C-NOMA}},fR_{\text{lb},2}^{l+1,\mathrm{C-NOMA}})\label{eq:obj_fun-p1.1}\\
&\text{s.t.}&     &  (\ref{eq:constraint-sum}), (\ref{eq:constraint+}).\nonumber
\end{alignat}
\normalsize
Problem (P1.1) is not still a convex optimization problem due to the non-concavity of $R_{\text{lb},1}^{l+1,\mathrm{C-NOMA}}$ with respect to power allocation coefficients. In the continue, we transform  (P1.1) into a convex problem by applying the variable change method. First of all, we rewrite the problem (P1.1) as
\small
\begin{alignat}{2}
\text{(P1.2):~~~~}&\underset{\bold{P}_1,\bold{P}_2,\eta}{\text{max}}~~~        && \eta \label{eq:obj_fun-p1.2}\\
 &\text{s.t.}&& R_{\text{lb},1}^{l+1,\mathrm{C-NOMA}}\geq\eta, \label{eq:constraint-R1-cnoma}\\
 &&& fR_{\text{lb},2}^{l+1,\mathrm{C-NOMA}}\geq\eta, \label{eq:constraint-R2-cnoma}\\
&&     &  (\ref{eq:constraint-sum}), (\ref{eq:constraint+}).\nonumber
\end{alignat}
\normalsize
It can be easily proved that (P1.2) is equivalent to (P1.1). Indeed, at the optimal solution for (P1.2), we must have $\eta=\min (R_{\text{lb},1}^{l+1,\mathrm{C-NOMA}},fR_{\text{lb},2}^{l+1,\mathrm{C-NOMA}})=R_{\text{lb},1}^{l+1,\mathrm{C-NOMA}}=fR_{\text{lb},2}^{l+1,\mathrm{C-NOMA}}$, and hence we can continue with (P1.2). In order to solve (P1.2), first we rewrite the constraints (\ref{eq:constraint-R1-cnoma}) and (\ref{eq:constraint-R2-cnoma}) based on the derived approximations in (\ref{rate-cnoma_lb}) for the achievable rates of users  as
\small
\begin{equation}\label{13}
    \frac{1}{2}\log_2(1+\frac{1}{x_1^l}+\frac{1}{y_1^l})+\frac{1}{2}d_{11}^l(\frac{\gamma_2p_2^{1,l+1}}{\gamma_1p_1^{1,l+1}}+\frac{1}{\gamma_1p_1^{1,l+1}}-x_1^l)+\frac{1}{2}d_{12}^l(\frac{\gamma_1p_1^{2,l+1}}{\gamma_2p_2^{2,l+1}}+\frac{1}{\gamma_2p_2^{2,l+1}}-y_1^l)\geq\eta,
\end{equation}
\begin{equation}\label{14}
    \frac{1}{2}\log_2(1+\frac{1}{x_2^l}+\frac{1}{y_2^l})+\frac{1}{2}d_{21}^l(\frac{1}{\gamma_1p_1^{2,l+1}}-x_2^l)+\frac{1}{2}d_{22}^l(\frac{1}{\gamma_2p_2^{1,l+1}}-y_2^l)\geq \frac{\eta}{f},
\end{equation}
\normalsize
respectively. Now, from (\ref{13}) and (\ref{14}), we can easily see that constraints (\ref{eq:constraint-R1-cnoma}) and (\ref{eq:constraint-R2-cnoma}) are in the form of a posynomial less than or equal to a monomial. As a consequence, according to \cite{boyd2004convex}, (P1.2) is a GP problem which can be transformed into a convex problem by performing the variable change as  $w_{i}^{1,l+1}=\ln{(p_i^{1,l+1})}$, and $w_{i}^{2,l+1}=\ln{(p_i^{2,l+1})}$ for $i\in\{1, 2\}$, and $\mu=\ln{(\eta)}$. After the variable change and some manipulations, by taking the logarithm of both sides of the inequalities in (\ref{13}) and (\ref{14}) we have
\small
\begin{equation}\label{15}
\begin{split}
    \log\Big(\exp(\mu)-\frac{1}{2}d_{11}^l(\frac{\gamma_2}{\gamma_1}&\exp(w_2^{1,l+1}-w_1^{1,l+1})
    +\frac{1}{\gamma_1}\exp(-w_1^{1,l+1}))-\frac{1}{2}d_{12}^l(\frac{\gamma_1}{\gamma_2}\exp(w_1^{2,l+1} -w_2^{2,l+1})\\&+\frac{1}{\gamma_2}\exp(-w_2^{2,l+1}))\Big)
    \leq 
    \log\Big(\frac{1}{2}\log_2(1+\frac{1}{x_1^l}+\frac{1}{y_1^l})+\frac{1}{2}d_{11}^l(-x_1^l)+\frac{1}{2}d_{12}^l(-y_1^l)\Big),
\end{split}
\end{equation}
\begin{equation}\label{16}
\begin{split}
    \log \Big( \frac{\exp(\mu)}{f}-\frac{1}{2}d_{21}^l(\frac{1}{\gamma_1}&\exp(-w_1^{2,l+1}))-\frac{1}{2}d_{22}^l(\frac{1}{\gamma_2}\exp(-w_2^{1,l+1}))\leq  \\& \log \Big(\frac{1}{2}\log_2(1+\frac{1}{x_2^l}+\frac{1}{y_2^l})+\frac{1}{2}d_{21}^l(-x_2^l)+\frac{1}{2}d_{22}^l(-y_2^l)\Big).
    \end{split}
\end{equation}
\normalsize
Also, the constraint (\ref{eq:constraint-sum}) will be
\small
\begin{equation}\label{17}
    \log\Big(\exp(w_i^{1,l+1})+\exp(w_i^{2,l+1})\Big)\leq \log(2), ~~\forall i\in\{1, 2\}.
\end{equation}
\normalsize
Since the logarithm of the summation of the exponential functions is a convex function \cite{boyd2004convex}, the left-hand side of the inequalities (\ref{15}), (\ref{16}), and (\ref{17}) are in the form of a convex function which means that all of the constraints of the (P1.2) are in the form of a convex function less than a constant. Hence, 
we have an optimization problem that is convex, and can be efficiently solved by standard convex optimization solvers. In this paper, we utilize CVX solver \cite{cvx} that applies the interior point method to solve these convex problems. Finally, the optimum values for power allocation and objective function of (P1.1) can be derived from $p_i^{1,l+1}=\exp{(w_{i}^{1,l+1})}$, and $p_i^{2,l+1}=\exp{(w_{i}^{2,l+1})}$ for $i\in\{1, 2\}$, and $\eta=\exp{(\mu)}$. 
 One can see the summary of the proposed iterative method for solving (P1) in Algorithm 1. \par
 Note that for deriving the approximations in (\ref{rate-cnoma_lb}) for achievable rates of users in Proposition \ref{prop-cnoma-opt}, we applied the first-order Taylor expansion for the approximation of a convex function, and therefore the achievable rates of users at each iteration $l$ of the SCA method are greater than the approximated rate, i.e., $R_{i}^{l+1,\mathrm{C-NOMA}}>R_{\text{lb},i}^{l+1,\mathrm{C-NOMA}}$ for $i\in\{1, 2\}$. Moreover, since we maximize  the approximated rates at each iteration $l$ of the Algorithm 1, we can write $R_{\text{lb},i}^{l+1,\mathrm{C-NOMA}}>R_{\text{lb},i}^{l,\mathrm{C-NOMA}}$. Considering this point that the Taylor expansion of a function around an initial point is exactly equal to the value of the original function at that point, i.e., $R_{\text{lb},i}^{l,\mathrm{C-NOMA}}=R_{i}^{l,\mathrm{C-NOMA}}$, we can conclude that $R_{i}^{l+1,\mathrm{C-NOMA}}>R_{i}^{l,\mathrm{C-NOMA}}$ which means that the achievable rates of users $R_{i}^{l,\mathrm{C-NOMA}}$ are non-decreasing with $l$ in the proposed method. Therefore, since the objective function is non-decreasing over iterations and is globally upper-bounded by the optimal value of the original problem (P1), the proposed sub-optimal algorithm is guaranteed to converge.
 % Also, since each iteration of Algorithm 1 only requires
%solving a convex problem, the overall complexity
%of Algorithm 1 is polynomial in the worst scenario.
\begin{algorithm}
\caption{Iterative power allocation for C-NOMA scheme in (P1) utilizing the proposed SCA-GP method.}\label{alg1}
\begin{algorithmic}[1]
%\For{$t = 1,2,\ldots$}
\State  Initialize the allocated powers $[p_i^{1,l},p_i^{2,l}]$ for  $i\in\{1, 2\}$, and let $l=0$.
\Repeat
\State Find the lower-bound approximations for achievable rates of user 1 and user 2 from (\ref{rate-cnoma_lb}) in Proposition \ref{prop-cnoma-opt}.
\State Transform (P1.1) into (P1.2) which is in the standard form of a GP problem. 
\State Perform the variable changes as  $w_{i}^{1,l+1}=\ln{(p_i^{1,l+1})}$, and $w_{i}^{2,l+1}=\ln{(p_i^{2,l+1})}$ for $i\in\{1, 2\}$, and $\mu=\ln{(\eta)}$, and take the logarithm of the both sides of the inequalities in (P1.2) to reach inequalities (\ref{15}), (\ref{16}), and (\ref{17}) that are convex constraints.
\State Solve this convex problem utilizing the interior-point method with the CVX solver \cite{cvx}, and find the optimal values as $p_i^{1,l+1}=\exp{(w_{i}^{1,l+1})}$, and $p_i^{2,l+1}=\exp{(w_{i}^{2,l+1})}$ for $i\in\{1, 2\}$, and $\eta=\exp{(\mu)}$.
\State Update $l=l+1$.
\Until {convergence or a maximum number of iterations is reached.}
%\EndFor
\end{algorithmic}
\end{algorithm}

\subsection{Power allocation for C-RSMA scheme}
In this subsection, we aim to find optimum power allocation coefficients $\bold{P}_i=[p_i^1,p_i^2,p_{i1},p_{i2}]$ (for $i\in\{1, 2\}$)  at the C-RSMA scheme that maximize the minimum rate of each user in (\ref{r1_crsma}) and (\ref{r2_crsma}) with considering proportional fairness coefficient. The optimization problem can be formulated as
%\begin{subequations}
\small
\begin{alignat}{2}
\text{(P2):~~~~}&\underset{\bold{P}_1,\bold{P}_2}{\text{max}}~~~        && \min (R_1^{\mathrm{C-RSMA}},fR_2^{\mathrm{C-RSMA}})\label{eq:obj_fun_RSMA-P2}\\
&\text{s.t.} &      &   p_i^1+p_i^2\leq2, ~~\forall i\in\{1, 2\}, \label{eq:constraint-sum-rsma}\\
&&& p_{i1}+p_{i2}\leq1, ~~\forall i\in\{1, 2\}, \label{eq:constraint-sum-rsma-stream}\\
               &&& p_i^1>0,~ p_i^2>0, ~ p_{i1}>0, ~ p_{i2}>0, ~~\forall i\in\{1, 2\}, \label{eq:constraint+rsma}
\end{alignat}
\normalsize
where $f$ indicates the proportional fairness coefficient. Constraint (\ref{eq:constraint-sum-rsma}) shows the total power consumption constraint in each user at the two mini-slots. Also, constraint (\ref{eq:constraint-sum-rsma-stream}) indicates the power allocation constraint between two symbols at each user. \par
One can see that $R_1^{\mathrm{C-RSMA}}$ and $R_2^{\mathrm{C-RSMA}}$ in (\ref{r1_crsma}) and (\ref{r2_crsma}) are not concave functions with respect to power allocation variables. Hence, (P2) is a non-convex optimization problem. In order to solve this problem, similar to the solution for (P1), we propose the SCA-GP method in which (P2) is transformed into a GP problem at each iteration of the SCA method.
In SCA method, an approximation of the non-concave terms $R_1^{\mathrm{C-RSMA}}$ and $R_2^{\mathrm{C-RSMA}}$ is maximized iteratively.
  In the following proposition, we utilize Lemma \ref{lemma1} in order to find this  approximation. Note that in the $l^{\text{th}}$ iteration of SCA method, the power allocation coefficients of each user are indicated by $p_i^{1,l},~ p_i^{2,l},~p_{i1}^l,~p_{i2}^l$ for  $i\in\{1, 2\}$. Also, LB approximated achievable rates for $R_1^{\mathrm{C-RSMA}}$ and $R_2^{\mathrm{C-RSMA}}$ at the iteration $l$ are shown by $R_{\text{lb},1}^{l,\mathrm{C-RSMA}}$ and $R_{\text{lb},2}^{l,\mathrm{C-RSMA}}$, respectively.
  
\begin{proposition}\label{prop-crsma-opt}
For any given power allocation at the $l^{\text{th}}$ iteration, i.e., $[p_i^{1,l},p_i^{2,l},p_{i1}^l,p_{i2}^l]$ for  $i\in\{1, 2\}$, one approximated non-decreasing lower-bound of the achievable rate of user 1 and user 2 in C-RSMA scheme and at the $(l+1)^{\text{th}}$ iteration of the SCA method is given by
\small
\begin{equation}
\begin{split}\label{rate-crsma_lb}
&R_{\text{lb},i}^{l+1,\mathrm{C-RSMA}}=\frac{1}{2}\log_2(1+\frac{1}{x_i^{1,l}}+\frac{1}{y_i^{1,l}})+\frac{1}{2}\log_2(1+\frac{1}{x_i^{2,l}}+\frac{1}{y_i^{2,l}})\\&+\frac{1}{2}d_{i1}^{1,l}(x_i^{1,l+1}-x_i^{1,l})+\frac{1}{2}d_{i2}^{1,l}(y_i^{1,l+1}-y_i^{1,l})+\frac{1}{2}d_{i1}^{2,l}(x_i^{2,l+1}-x_i^{2,l})+\frac{1}{2}d_{i2}^{2,l}(y_i^{2,l+1}-y_i^{2,l}), ~i\in\{1, 2\},
\end{split}
\end{equation}
\normalsize
where $d_{i1}^{s,l}$ and $d_{i2}^{s,l}$ are partial derivatives as $d_{i1}^{s,l}=\frac{-1}{\ln{2}\big((x_i^{s,l})^2+x_i^{s,l}+(\frac{x_i^{s,l}}{y_i^{s,l}})\big)}$ and $d_{i2}^{s,l}=\frac{-1}{\ln{2}\big((y_i^{s,l})^2+y_i^{s,l}+(\frac{y_i^{s,l}}{x_i^{s,l}})\big)}$, for $i\in\{1, 2\}$ and $s\in\{1, 2\}$, and $x_1^{1,l}=\frac{p_{12}^{l}}{p_{11}^{l}}+\frac{p_2^{1,l}p_{21}^l\gamma_2}{p_1^{1,l}p_{11}^l\gamma_1}+\frac{p_2^{1,l}p_{22}^l\gamma_2}{p_1^{1,l}p_{11}^l\gamma_1}+\frac{1}{p_1^{1,l}p_{11}^l\gamma_1}$, $y_1^{1,l}=\frac{p_{12}^{l}}{p_{11}^{l}}+\frac{p_1^{2,l}p_{21}^{l}\gamma_1}{p_2^{2,l}p_{11}^{l}\gamma_2}+\frac{p_1^{2,l}p_{22}^{l}\gamma_1}{p_2^{2,l}p_{11}^{l}\gamma_2}+\frac{1}{p_2^{2,l}p_{11}^{l}\gamma_2}$, $x_1^{2,l}=\frac{p_2^{1,l}p_{22}^{l}\gamma_2}{p_1^{1,l}p_{12}^{l}\gamma_1}+\frac{1}{p_1^{1,l}p_{12}^{l}\gamma_1}$, $y_1^{2,l}=\frac{p_1^{2,l}p_{22}^{l}\gamma_1}{p_2^{2,l}p_{12}^{l}\gamma_2}+\frac{1}{p_2^{2,l}p_{12}^{l}\gamma_2}$, $x_2^{1,l}=\frac{p_{22}^{l}}{p_{21}^{l}}+\frac{p_1^{1,l}p_{12}^{l}\gamma_1}{p_2^{1,l}p_{21}^{l}\gamma_2}+\frac{1}{p_2^{1,l}p_{21}^{l}\gamma_2}$, $y_2^{1,l}=\frac{p_{22}^{l}}{p_{21}^{l}}+\frac{p_2^{2,l}p_{12}^{l}\gamma_2}{p_1^{2,l}p_{21}^{l}\gamma_1}+\frac{1}{p_1^{2,l}p_{21}^{l}\gamma_1}$, $x_2^{2,l}=\frac{1}{p_2^{1,l}p_{22}^{l}\gamma_2}$, and $y_2^{2,l}=\frac{1}{p_1^{2,l}p_{22}^{l}\gamma_1}$.
\end{proposition}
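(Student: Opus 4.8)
The plan is to follow the same template as the proof of Proposition \ref{prop-cnoma-opt}, exploiting the fact that $R_1^{\mathrm{C-RSMA}}$ and $R_2^{\mathrm{C-RSMA}}$ in (\ref{r1_crsma}) and (\ref{r2_crsma}) are each a sum of two logarithmic terms, every one of which can be cast into the canonical shape of Lemma \ref{lemma1}. First I would rewrite each of the four $\log_2$ terms as $\frac{1}{2}\log_2\!\big(1+\frac{1}{x}+\frac{1}{y}\big)$ by dividing the numerator and denominator of each signal-to-interference-plus-noise fraction by its own desired-signal monomial. For instance, dividing the numerator and denominator of the first fraction inside the first logarithm of (\ref{r1_crsma}) by $p_1^1 p_{11}\gamma_1$ produces exactly $1/x_1^{1,l}$ with $x_1^{1,l}=\frac{p_{12}^l}{p_{11}^l}+\frac{p_2^{1,l}p_{21}^l\gamma_2}{p_1^{1,l}p_{11}^l\gamma_1}+\frac{p_2^{1,l}p_{22}^l\gamma_2}{p_1^{1,l}p_{11}^l\gamma_1}+\frac{1}{p_1^{1,l}p_{11}^l\gamma_1}$, and similarly for $y_1^{1,l}$, $x_1^{2,l}$, $y_1^{2,l}$ and the four corresponding quantities for user~2. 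After this change of variables the two terms composing $R_i^{\mathrm{C-RSMA}}$ read $\frac{1}{2}\log_2\!\big(1+\frac{1}{x_i^{1}}+\frac{1}{y_i^{1}}\big)+\frac{1}{2}\log_2\!\big(1+\frac{1}{x_i^{2}}+\frac{1}{y_i^{2}}\big)$.

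Next I would invoke Lemma \ref{lemma1} with $a=b=1$, which guarantees that each term $\log_2(1+\frac{1}{x}+\frac{1}{y})$ is convex in its pair of arguments (the harmless positive scaling $1/\ln 2$ in passing from base $2$ to the natural logarithm preserves convexity); the arguments lie in the open positive orthant by the strict positivity constraints (\ref{eq:constraint+rsma}), so the lemma applies. A differentiable convex function lies everywhere above its first-order Taylor (tangent-plane) expansion, hence for every term I would lower-bound it by its linearization about the current iterate $(x_i^{s,l},y_i^{s,l})$. Summing the two tangent-plane bounds over $s\in\{1,2\}$ yields precisely (\ref{rate-crsma_lb}), with the coefficients $\frac{1}{2}d_{i1}^{s,l}$ and $\frac{1}{2}d_{i2}^{s,l}$ obtained as the partial derivatives $\partial_x$ and $\partial_y$ of $\frac{1}{2}\log_2(1+\frac{1}{x}+\frac{1}{y})$ evaluated at $(x_i^{s,l},y_i^{s,l})$. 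Since the tangent-plane bound is tight at the expansion point, we have $R_{\text{lb},i}^{l+1,\mathrm{C-RSMA}}\le R_i^{\mathrm{C-RSMA}}$ with equality when the iterate is unchanged, which is the non-decreasing lower-bound property needed for the SCA convergence argument used after Proposition \ref{prop-cnoma-opt}.

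The routine but error-prone heart of the argument is the bookkeeping of the eight auxiliary variables: because rate splitting doubles the number of streams, each denominator carries several cross-terms, and one must verify that dividing through by the correct desired-signal monomial reproduces exactly the stated $x_i^{s,l}$ and $y_i^{s,l}$ (in particular the leading $p_{i2}^l/p_{i1}^l$ terms, which appear in the first-layer expressions $x_i^{1,l},y_i^{1,l}$ but are absent from the second-layer ones after the intra-user SIC). The only genuinely analytic step, namely convexity, is already furnished by Lemma \ref{lemma1}, so I expect no conceptual obstacle; the main risk is a sign or indexing slip while matching the linearization coefficients $d_{i1}^{s,l},d_{i2}^{s,l}$ to the partial derivatives, which I would guard against by checking them against the already-verified C-NOMA coefficients in Proposition \ref{prop-cnoma-opt}, of which they are the natural two-layer analogue.
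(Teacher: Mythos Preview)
Your proposal is correct and follows exactly the approach the paper intends: the paper's proof of Proposition~\ref{prop-crsma-opt} consists solely of the remark that it is ``similar to the proof of Proposition~\ref{prop-cnoma-opt},'' and your write-up carries out precisely that transplantation---recasting each of the four $\log_2$ terms in (\ref{r1_crsma})--(\ref{r2_crsma}) into the canonical form of Lemma~\ref{lemma1} via the stated $x_i^{s,l},y_i^{s,l}$ substitutions, then summing the two first-order Taylor lower bounds per user. The bookkeeping caveats you flag are the only places requiring care, and they are purely mechanical.
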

\begin{proof}
The proof for this proposition is similar to the proof of Proposition \ref{prop-cnoma-opt}, and hence we remove it.  
\end{proof}
This proposition shows that the rates of user 1  ($R_1^{\mathrm{C-RSMA}}$) in (\ref{r1_crsma}) and user 2 ($R_2^{\mathrm{C-NOMA}}$) in (\ref{r2_crsma}) are lower-bounded  by the $R_{\text{lb},1}^{l+1,\mathrm{C-RSMA}}$ and $R_{\text{lb},2}^{l+1,\mathrm{C-RSMA}}$ in (\ref{rate-crsma_lb}), respectively. It then follows that the optimal value of (P2) is lower-bounded by the optimal value of the following problem 
%\begin{subequations}
\small
\begin{alignat}{2}
\text{(P2.1):~~~~}&\underset{\bold{P}_1,\bold{P}_2}{\text{max}}~~~        && \min (R_{\text{lb},1}^{l+1,\mathrm{C-RSMA}},fR_{\text{lb},2}^{l+1,\mathrm{C-RSMA}})\label{eq:obj_fun-p1.1}\\
&\text{s.t.}&     &  (\ref{eq:constraint-sum-rsma}), (\ref{eq:constraint-sum-rsma-stream}), (\ref{eq:constraint+rsma}).\nonumber
\end{alignat}
\normalsize
Problem (P1.2) is still a non-convex problem due to the non-concavity of its objective function. In order to solve (P2.1), first of all, we rewrite problem (P2.1) as
\small
\begin{alignat}{2}
\text{(P2.2):~~~~}&\underset{\bold{P}_1,\bold{P}_2,\eta}{\text{max}}~~~        && \eta \label{eq:obj_fun-p2.2}\\
 &\text{s.t.}&& R_{\text{lb},1}^{l+1,\mathrm{C-RSMA}}\geq\eta, \label{eq:constraint-R1-crsma}\\
 &&& fR_{\text{lb},2}^{l+1,\mathrm{C-RSMA}}\geq\eta, \label{eq:constraint-R2-crsma}\\
&&     &  (\ref{eq:constraint-sum-rsma}), (\ref{eq:constraint-sum-rsma-stream}), (\ref{eq:constraint+rsma}).\nonumber
\end{alignat}
\normalsize
Note that constraints (\ref{eq:constraint-R1-crsma}) and (\ref{eq:constraint-R2-crsma}) are in the form of a posynomial less than or equal to a monomial. Hence, (P2.2) is a standard GP problem that can be transformed into a convex problem.
%by performing the variable change as  $w_{i}^{1,l+1}=\ln{(p_i^{1,l+1})}$,  $w_{i}^{2,l+1}=\ln{(p_i^{2,l+1})}$, $w_{i1}^{l+1}=\ln{(p_{i1}^{l+1})}$, and $w_{i2}^{l+1}=\ln{(p_{i2}^{l+1})}$ for $i\in\{1, 2\}$, and $\mu=\ln{(\eta)}$. By taking logarithm of the objective function at the both sides of the inequalities in the new optimization problem, we attain an optimization problem that is convex and can be efficiently solved by convex optimization techniques such as the interior-point method. Finally, the optimum values for power allocation and objective function of (P2.1) can be derived from $p_i^{1,l+1}=\exp{(w_{i}^{1,l+1})}$, $p_i^{2,l+1}=\exp{(w_{i}^{2,l+1})}$, $p_{i1}^{l+1}=\exp{(w_{i1}^{l+1})}$, and $p_{i2}^{l+1}=\exp{(w_{i2}^{l+1})}$ for $i\in\{1, 2\}$,  and $\eta=\exp{(\mu)}$. 
One can see the summary of the proposed iterative method for solving (P2) in Algorithm 2. Similar to Algorithm 1, 
%Note that we proved in Proposition \ref{prop-crsma-opt} that 
we can prove that the objective function of (P2.2) is non-decreasing over iterations and is globally upper-bounded by the optimal value of (P2). Therefore, the proposed sub-optimal algorithm is guaranteed to converge.  % Also, since each iteration of Algorithm 2 only requires
%solving a convex problem, the overall complexity
%of Algorithm 2 is polynomial in the worst scenario.

\subsection{Computational Complexity Analysis}
Now, the computational complexity of the proposed iterative power allocations for (P1) in Algorithm 1, and (P2) in Algorithm 2 are presented. At each iteration $l$ of the proposed SCA-GP methods in Algorithms 1 and 2, computational complexity is dominated by solving convex problems in (P1.2) and (P2.2), respectively, which are solved by the interior point method. According to \cite{boyd2004convex}, the interior point method
method requires $\log(\frac{n_c}{t^0\varrho})/\log\varepsilon$ number of iterations
(Newton steps) to solve a convex problem, where $n_c$ is the total
number of constraints, $t^0$ is the initial point for approximating the accuracy of the interior-point
method, $0<\varrho\ll1$ is
the stopping criterion, and $\varepsilon$ is used for updating the accuracy
of the interior point method. For problems (P1.2) and (P2.2), the number of constraints are equal to $n_{c,1}=4I$ and $n_{c,2}=7I$, respectively, in which $I$ indicates the number of users. Hence, the computational complexity of Algorithm 1 and 2 will be $\mathcal{O}(n_{\mathrm{iter,1}}(\frac{\log(\frac{I}{t^{0}\varrho})}{\log\varepsilon}))$ and $\mathcal{O}(n_{\mathrm{iter,2}}(\frac{\log(\frac{I}{t^{0}\varrho})}{\log\varepsilon}))$, in which $n_{\mathrm{iter,1}}$ and $n_{\mathrm{iter,2}}$ indicate the number of iterations for convergence of Algorithm 1 and 2, respectively.

\begin{algorithm}
\caption{Iterative power allocation for C-RSMA scheme in (P2) utilizing the proposed SCA-GP method.}\label{alg1}
\begin{algorithmic}[1]
%\For{$t = 1,2,\ldots$}
\State  Initialize the allocated powers $[p_i^{1,l},p_i^{2,l},p_{i1}^l,p_{i2}^l]$ for  $i\in\{1, 2\}$, and let $l=0$.
\Repeat
\State Find the lower-bound approximations for achievable rates of user 1 and user 2 from (\ref{rate-crsma_lb}) in Proposition \ref{prop-crsma-opt}.
\State Transform (P2.1) into (P2.2) which is in the standard form of a GP problem. 
\State Perform the variable changes as  $w_{i}^{1,l+1}=\ln{(p_i^{1,l+1})}$,  $w_{i}^{2,l+1}=\ln{(p_i^{2,l+1})}$, $w_{i1}^{l+1}=\ln{(p_{i1}^{l+1})}$, and $w_{i2}^{l+1}=\ln{(p_{i2}^{l+1})}$ for $i\in\{1, 2\}$, and $\mu=\ln{(\eta)}$, and take logarithm of the objective function and both sides of inequalities in (P2.2).
\State Solve this convex problem utilizing the interior-point method with the CVX solver \cite{cvx}, and find the optimal values as $p_i^{1,l+1}=\exp{(w_{i}^{1,l+1})}$, $p_i^{2,l+1}=\exp{(w_{i}^{2,l+1})}$, $p_{i1}^{l+1}=\exp{(w_{i1}^{l+1})}$, and $p_{i2}^{l+1}=\exp{(w_{i2}^{l+1})}$ for $i\in\{1, 2\}$, and $\eta=\exp{(\mu)}$.
\State Update $l=l+1$.
\Until {convergence or a maximum number of iterations is reached.}
%\EndFor
\end{algorithmic}
\end{algorithm}

\section{Outage Probability and Diversity Analysis}
 In this section, we derive the asymptotic outage probability and diversity order of the proposed C-NOMA and C-RSMA schemes. Note that the C-NOMA and C-RSMA schemes are proposed to improve both the achievable rate and diversity. In order to maximize the minimum rate of each user, we formulated two optimization problems for C-NOMA and C-RSMA schemes in the previous section, and found the optimum values of power allocation coefficients based on the realizations of the channels. When the CSI of channels is not available, the proposed power allocation algorithms in section IV are performed based on the large-scale fading of the links. In this case, a deep fading can decrease the achievable rate of a user below its required rate, and cause an outage in the system. The outage probability of user $i$ (for $i\in\{1, 2\}$) and transmission scheme $s$ (for $s\in\{\mathrm{OMA}, \mathrm{C-OMA}, \mathrm{NOMA}, \mathrm{C-NOMA}, \mathrm{RSMA}, \mathrm{C-RSMA}\}$) is given by 
 \small
\begin{equation}
    P_{i,out}^{s}= \mathrm{Pr}(R_{i}^{s}<R_{i}^{th}),
\end{equation}
\normalsize
where $R_{i}^{th}$ is the required threshold rate for user $i$ so that $R_{1}^{th}=f R_{2}^{th}$. Then, the outage probability of the system for transmission scheme $s$ is defined as
\small
\begin{equation}
     P_{out}^{s}=1- \mathrm{Pr}(R_{1}^{s}>R_{1}^{th}, R_{2}^{s}>R_{2}^{th}).
\end{equation}
\normalsize
In the continue, we aim to derive the high SNR approximations for the outage probability of the C-NOMA and C-RSMA schemes. For this end, firstly, we introduce the following lemma.

\begin{lemma}\label{lemma2}
Assume that X and Y are two exponentially
distributed random variables with the mean values of $\lambda_x$ and $\lambda_y$, respectively.
The cumulative density function (CDF) of the random variable
Z=X+Y is given by
\small
\begin{equation}
    F_Z(z)=1-\exp{(-\frac{z}{\lambda_y})}-\frac{\exp{(-\frac{z}{\lambda_x})}-\exp{(-\frac{z}{\lambda_y}})}{1-\frac{\lambda_y}{\lambda_x}}, ~z>0.
\end{equation}
\normalsize
\end{lemma}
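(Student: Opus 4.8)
The plan is to compute the CDF of $Z=X+Y$ directly by conditioning on $Y$ and exploiting the independence of $X$ and $Y$. Writing the exponential densities as $f_X(x)=\frac{1}{\lambda_x}\exp(-x/\lambda_x)$ and $f_Y(y)=\frac{1}{\lambda_y}\exp(-y/\lambda_y)$ for $x,y>0$, and using the marginal CDF $F_X(x)=1-\exp(-x/\lambda_x)$, I would start from
\begin{equation}
F_Z(z)=\mathrm{Pr}(X+Y\leq z)=\int_0^z f_Y(y)\,F_X(z-y)\,dy,\quad z>0,
\end{equation}
which is valid because $X$ and $Y$ are independent and both supported on the positive half-line (so the inner event $X\leq z-y$ only contributes for $0\leq y\leq z$).

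First I would substitute the two expressions and split the integral into two pieces. The piece $\int_0^z f_Y(y)\,dy$ integrates immediately to $1-\exp(-z/\lambda_y)$, which already supplies the leading two terms of the claimed formula. The remaining piece $\int_0^z \frac{1}{\lambda_y}\exp(-y/\lambda_y)\exp\big(-(z-y)/\lambda_x\big)\,dy$ is the only genuine computation: after factoring out $\exp(-z/\lambda_x)$ the integrand collapses to the single exponential $\exp\big(y(1/\lambda_x-1/\lambda_y)\big)$, whose antiderivative contributes the constant factor $1/(1/\lambda_x-1/\lambda_y)$. Evaluating at the limits and rewriting $\frac{\lambda_x}{\lambda_x-\lambda_y}=\frac{1}{1-\lambda_y/\lambda_x}$ turns this piece into exactly $\frac{\exp(-z/\lambda_x)-\exp(-z/\lambda_y)}{1-\lambda_y/\lambda_x}$, and subtracting it from $1-\exp(-z/\lambda_y)$ reproduces the stated expression.

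The hard part --- really the only place the argument could break --- is the evaluation of that exponential integral, which requires the exponent $1/\lambda_x-1/\lambda_y$ to be nonzero, i.e.\ $\lambda_x\neq\lambda_y$. This is the regime of interest here, since the two means correspond to distinctly scaled channel power gains; in the degenerate case $\lambda_x=\lambda_y$ the antiderivative is instead linear in $y$ and $Z$ follows an Erlang law with CDF $1-\exp(-z/\lambda_x)-\frac{z}{\lambda_x}\exp(-z/\lambda_x)$, which is precisely the limit $\lambda_y\to\lambda_x$ of the stated formula (a short L'H\^opital check on the indeterminate fraction). As a final sanity check I would confirm $F_Z(0)=0$ and $F_Z(z)\to 1$ as $z\to\infty$, both of which are immediate from the closed form.
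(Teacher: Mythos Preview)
Your proposal is correct and is precisely the standard convolution computation the paper has in mind; the paper itself omits the argument entirely, declaring it ``straightforward,'' so your conditioning-on-$Y$ derivation (together with the observation that the formula needs $\lambda_x\neq\lambda_y$ and recovers the Erlang CDF in the limit) fills in exactly what the authors chose to suppress.
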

\begin{proof}
The proof is straightforward, and hence we remove it. 
\end{proof}
\begin{proposition}\label{proposition-outage}
Asymptotic outage probability of the proposed C-NOMA and C-RSMA schemes are given by 
\small
\begin{equation}\label{out_cnoma}
  P_{out}^{\mathrm{C-NOMA},\infty}=\frac{0.5(2^{2R_{2}^{th}}-1)}{p_2^1p_1^2\Omega_1\Omega_2},
\end{equation}
and
\begin{equation}\label{out_crsma}
P_{out}^{\mathrm{C-RSMA},\infty}=\frac{0.5(2^{2R_{2}^{th}}-1)}{p_2^1p_1^2p_{22}^2\Omega_1\Omega_2},
\end{equation}
\normalsize
respectively, where $\Omega_i=\frac{\bar{P}_{i}\beta_0}{\sigma_{BS}^2d_i^\alpha
}$ for $i\in\{1, 2\}$ is the mean SNR value of each user.
\end{proposition}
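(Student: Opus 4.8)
The plan is to isolate, at high SNR, the single dominant outage event and to evaluate its probability in closed form with Lemma \ref{lemma2}. First I would note that in the C-NOMA rates (\ref{r1_cnoma})--(\ref{r2_cnoma}) each user receives its message over two statistically independent paths that are combined by MRC: user~$2$'s rate (\ref{r2_cnoma}) depends on $\gamma_1 p_1^2$ (the relayed path, governed by $h_1$) and on $\gamma_2 p_2^1$ (the direct path, governed by $h_2$), and $h_1,h_2$ are independent. An outage for a user therefore forces \emph{both} of its paths into deep fade at once, which already signals the second-order decay. Writing $P_{out}^{\mathrm{C-NOMA}}=1-\mathrm{Pr}(R_1^{\mathrm{C-NOMA}}>R_1^{th},R_2^{\mathrm{C-NOMA}}>R_2^{th})$, I would argue that at high SNR the binding constraint is the weak user (user~$2$), so $P_{out}^{\mathrm{C-NOMA},\infty}$ is governed by $\mathrm{Pr}(R_2^{\mathrm{C-NOMA}}<R_2^{th})$; the remaining contributions share the same diversity order but are subsumed in this dominant term.

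The second step is to turn the rate condition into a tail event for a sum of exponentials. Since $\gamma_i=g_i\Omega_i$ with $g_i$ unit-mean exponential, the quantities $X=\gamma_1 p_1^2$ and $Y=\gamma_2 p_2^1$ are independent exponential variables with means $\lambda_x=\Omega_1 p_1^2$ and $\lambda_y=\Omega_2 p_2^1$. From (\ref{r2_cnoma}), $R_2^{\mathrm{C-NOMA}}<R_2^{th}$ is equivalent to $X+Y<2^{2R_2^{th}}-1$, so the outage probability is exactly $F_Z(2^{2R_2^{th}}-1)$ for $Z=X+Y$, and Lemma \ref{lemma2} supplies this CDF.

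The asymptotic step is to expand $F_Z$ in the high-SNR regime, where the threshold $z=2^{2R_2^{th}}-1$ is fixed while $\lambda_x,\lambda_y\to\infty$, i.e. $z/\lambda_x$ and $z/\lambda_y$ are small. A Taylor expansion of the exponentials in Lemma \ref{lemma2} shows the zeroth- and first-order terms cancel and the first surviving contribution is proportional to $z^2/(\lambda_x\lambda_y)$, the product of the two per-path fade probabilities. Substituting $\lambda_x=\Omega_1 p_1^2$ and $\lambda_y=\Omega_2 p_2^1$ reproduces the denominator $p_2^1 p_1^2\Omega_1\Omega_2$ of (\ref{out_cnoma}) together with the coefficient $\tfrac{1}{2}$ and the dependence on $2^{2R_2^{th}}-1$, and exhibits the $\mathcal{O}(\Omega^{-2})$ decay that establishes diversity order two.

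For the C-RSMA result (\ref{out_crsma}) I would repeat the argument with the weakest decoding step, the last stream $s_{22}$, whose rate $R_{22}=\frac{1}{2}\log_2\big(1+p_{22}(\gamma_1 p_1^2+\gamma_2 p_2^1)\big)$ carries the split-power factor $p_{22}$. The outage condition becomes $\gamma_1 p_1^2+\gamma_2 p_2^1<(2^{2R_2^{th}}-1)/p_{22}$, so the same Lemma \ref{lemma2} expansion, now with means $\Omega_1 p_1^2 p_{22}$ and $\Omega_2 p_2^1 p_{22}$, produces the extra $p_{22}^2$ in the denominator of (\ref{out_crsma}). The main obstacle is exactly the reduction used in the first step: rigorously showing that this single event dominates the union of all decoding-failure events (over both users, and over all four streams for C-RSMA) and that the cross terms and higher-order corrections in the expansion are genuinely $o(\Omega^{-2})$, so that the single-event probability indeed captures the true asymptotics.
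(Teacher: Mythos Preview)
Your approach is essentially the paper's: reduce the system outage to user~$2$'s rate event, write that event as the CDF of a sum of two independent exponentials via Lemma~\ref{lemma2}, and Taylor-expand the exponentials to extract the leading $z^2/(2\lambda_x\lambda_y)$ term.

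The one place where the paper is sharper is exactly the reduction you flag as ``the main obstacle.'' Rather than arguing heuristically that user~$2$ is the binding constraint at high SNR, the paper invokes the structure of the power allocation: the $p_i^j$ (and $p_{ij}$) are the optimizers of (P1)/(P2), whose max--min solutions satisfy $R_1^{\mathrm{C-NOMA}}=fR_2^{\mathrm{C-NOMA}}$ (respectively $R_1^{\mathrm{C-RSMA}}=fR_2^{\mathrm{C-RSMA}}$). Since the thresholds are chosen with $R_1^{th}=fR_2^{th}$, the events $\{R_1<R_1^{th}\}$ and $\{R_2<R_2^{th}\}$ coincide exactly, so $P_{out}=\Pr(R_2<R_2^{th})$ with no asymptotic slack and no union over users needed. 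For C-RSMA, the paper then argues, as you do, that at high SNR $R_2^{\mathrm{C-RSMA}}$ is represented by the second logarithm in (\ref{r2_crsma}) (the $R_{21}$ term is interference-limited and bounded), so the outage event again reduces to a sum of exponentials, now scaled by $p_{22}$, yielding the extra $p_{22}^2$ in (\ref{out_crsma}).
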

\begin{proof}
See Appendix \ref{appendix-outage}.
\end{proof}
\begin{corollary}\label{corollary1}
Diversity orders of the proposed C-NOMA and C-RSMA schemes are two. 
\end{corollary}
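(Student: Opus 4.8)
The plan is to invoke the standard definition of diversity order and apply it directly to the asymptotic outage expressions obtained in Proposition~\ref{proposition-outage}. Recall that the diversity order of a transmission scheme $s$ is defined as
\begin{equation}
d^{s}=-\lim_{\bar{P}\to\infty}\frac{\log P_{out}^{s}}{\log\bar{P}},
\end{equation}
where $\bar{P}$ denotes the common transmit power budget, so that the operating SNR grows linearly with $\bar{P}$. This quantity captures the negative exponent with which the outage probability decays as the transmit power increases.

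First I would substitute the closed-form asymptotic outage probabilities (\ref{out_cnoma}) and (\ref{out_crsma}) into this definition. The key observation is that each mean SNR term satisfies $\Omega_i=\frac{\bar{P}_{i}\beta_0}{\sigma_{BS}^2 d_i^{\alpha}}$, so $\Omega_i$ is \emph{linearly} proportional to the transmit power $\bar{P}_i$. Consequently, under a common power budget $\bar{P}_1=\bar{P}_2=\bar{P}$, the product $\Omega_1\Omega_2$ scales as $\bar{P}^{2}$. Next I would note that in both (\ref{out_cnoma}) and (\ref{out_crsma}) the entire transmit-power dependence is confined to the denominator factor $\Omega_1\Omega_2$: the numerator $0.5(2^{2R_{2}^{th}}-1)$ depends only on the fixed threshold rate, while the power-allocation coefficients $p_2^1$, $p_1^2$ (and $p_{22}$ for C-RSMA) are set on the basis of the large-scale fading of the links and are therefore SNR-independent constants. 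Hence both outage probabilities behave as $P_{out}^{\infty}\propto\bar{P}^{-2}$, and evaluating the limit yields $d^{\mathrm{C-NOMA}}=d^{\mathrm{C-RSMA}}=2$, which completes the proof.

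Since the heavy analytical work has already been carried out in deriving Proposition~\ref{proposition-outage}, there is no substantial obstacle remaining and the corollary follows almost immediately. The only point requiring care is to confirm that the constants appearing in the numerators are genuinely independent of the transmit power, so that the full $\bar{P}^{-2}$ scaling is attributable solely to the product $\Omega_1\Omega_2$ of the two per-user mean SNRs. This doubling of the SNR exponent is precisely the signature of the transmit diversity harvested through the two cooperating links, namely the direct transmission in the first mini-slot and the relayed transmission in the second, thereby confirming that each user enjoys a diversity order of two.
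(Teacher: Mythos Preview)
Your proposal is correct and follows essentially the same approach as the paper: invoke the definition of diversity order and apply it to the asymptotic outage expressions of Proposition~\ref{proposition-outage}, observing that the only SNR-dependent factor is $\Omega_1\Omega_2\propto\bar{P}^{2}$. The paper's proof is simply a terser version of what you wrote.
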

\begin{proof}
Utilizing the asymptotic outage probability of the proposed schemes in (\ref{out_cnoma}) and (\ref{out_crsma}), and according to the definition of the diversity order (i.e., $D=\lim_{\mathrm{SNR}\rightarrow{\infty}}-\frac{\log P_{out}}{\log \mathrm{SNR}}$), it can be easily seen that the diversity orders of both schemes are two, and the proof is completed.
\end{proof}
Corollary \ref{corollary1} means that the proposed C-NOMA and C-RSMA schemes can achieve full diversity, and hence are more robust to channel variations compared with non-cooperative schemes.\\ 
\section{Numerical Results}
In this section, numerical results are provided in order to show the performance gain of the proposed cooperative schemes. The following default parameters are applied in the simulations except that we specify different values for them. We assume that the two users are close enough to establish a strong link with each other. Hence, we assume that user 1 and user 2 have distances $d_1=d_2=100~\mathrm{m}$ to the BS. The path loss exponent equals $\alpha=~3.7$. The communication bandwidth is $B=1~\mathrm{MHz}$  with a
carrier frequency at 5 GHz, and a noise power spectral density of $N_0=-174~\mathrm{dBm/Hz}$. 
%The reference SNR at the distance $d_0=1~\mathrm{m}$ equals $\frac{\beta_0}{BN_0}=70~\mathrm{dB}$.
 Also, we assume that the average transmit power at each user equals $\bar{P}_1=\bar{P}_2=20~\mathrm{dBm}$. The feasible initial values for  power allocation coefficients in C-NOMA scheme at Algorithm 1 are $[p_i^{1,0},p_i^{2,0}]=[1,1]$ for  $i\in\{1, 2\}$. Also, the feasible initial values for power allocation coefficients in C-RSMA scheme at Algorithm 2 are $[p_i^{1,0},p_i^{2,0},p_{i1}^0,p_{i2}^0]=[1,1,0.5,0.5]$ for  $i\in\{1, 2\}$. \par

Fig. \ref{fig:RR-d10} shows the rate region for the proposed cooperative schemes and their non-cooperative counterparts for the case that the difference between the channel power gains of two users is equal to $h_1-h_2=10 ~\mathrm{dB}$. We can also see line $R_1=fR_2$ for three different proportional fairness coefficients $f=1,\frac{1}{3},3$ in order to clearly show the performance difference among schemes when two users have different QoS requirements. In Fig. \ref{fig:rates-region-p8d10}, one can see the rate region for the case that transmit power of users is low, i.e., $\bar{P}_1=\bar{P}_2=8~\mathrm{dBm}$. It can be seen that cooperative schemes have significant performance gain rather than non-cooperative schemes for all three proportional fairness coefficients $f=1,\frac{1}{3},3$. 
%Note that due to a better channel condition for user 1, OMA, NOMA, and RSMA are able to provide a good achievable rate for this user. However, user 2 has not a good channel condition, and hence its achievable rate is less than user 1 in non-cooperative schemes. 
Note that rate region curves for cooperative schemes are symmetric at this figure which is due to the fact that in cooperative schemes, user 1 with better channel conditions helps user 2 so that both users can reach a good achievable rate. 
%Specially, when user 2 requires more achievable rate rather than user 1, e.g., when $f=\frac{1}{3}$, user 1 can help user 2 which leads to a have better performance cooperative schemes rather than non-cooperative schemes. 
Also, one can see that the rate region of C-NOMA (NOMA) is a subset of the rate region of C-RSMA (RSMA) and the rate region of C-OMA (OMA) is a subset of the rate region of C-NOMA (NOMA).
These superiorities are at the cost of increased complexity at the transmitter and receiver of C-NOMA (NOMA) and C-RSMA (RSMA) schemes. In Fig. \ref{fig:rates-region-p20d10}, one can see the rate region for the case that transmit power of users is high, i.e., $\bar{P}_1=\bar{P}_2=20~\mathrm{dBm}$. We can see that the superiorities of cooperative schemes rather than their non-cooperative counterparts have been decreased. The reason is that the increased transmit power leads to a better achievable rate for user 2 in non-cooperative schemes.  
 From Fig. \ref{fig:RR-d10}, we can also say that when two users have more different QoS requirements, the superiority of C-RSMA over C-NOMA increases. 
 
Note that the proposed transmission and detection schemes for C-NOMA and C-RSMA schemes are not optimal, and hence the derived achievable rates for them in Propositions 1 and 2 are not their capacity. The achievable
rate region is, by definition, smaller than or equal to the capacity region. This implies that any performance gains reported
in this paper are a lower bound on the gains that can be achieved via user cooperation. On the other hand, note that our proposed cooperative schemes achieve full diversity of two. According to the diversity-multiplexing tradeoff \cite{tradeoff}, increasing diversity order decreases the multiplexing gain. Indeed, due to working in two mini-slots, we have a loss
of spectral efficiency with coefficient $0.5$. As a consequence, at some points of the rate region, non-cooperative schemes have better rate performance rather than cooperative schemes.\par
\begin{figure}[t]
\begin{subfigure}{0.5\textwidth}
\includegraphics[height=5.3cm]{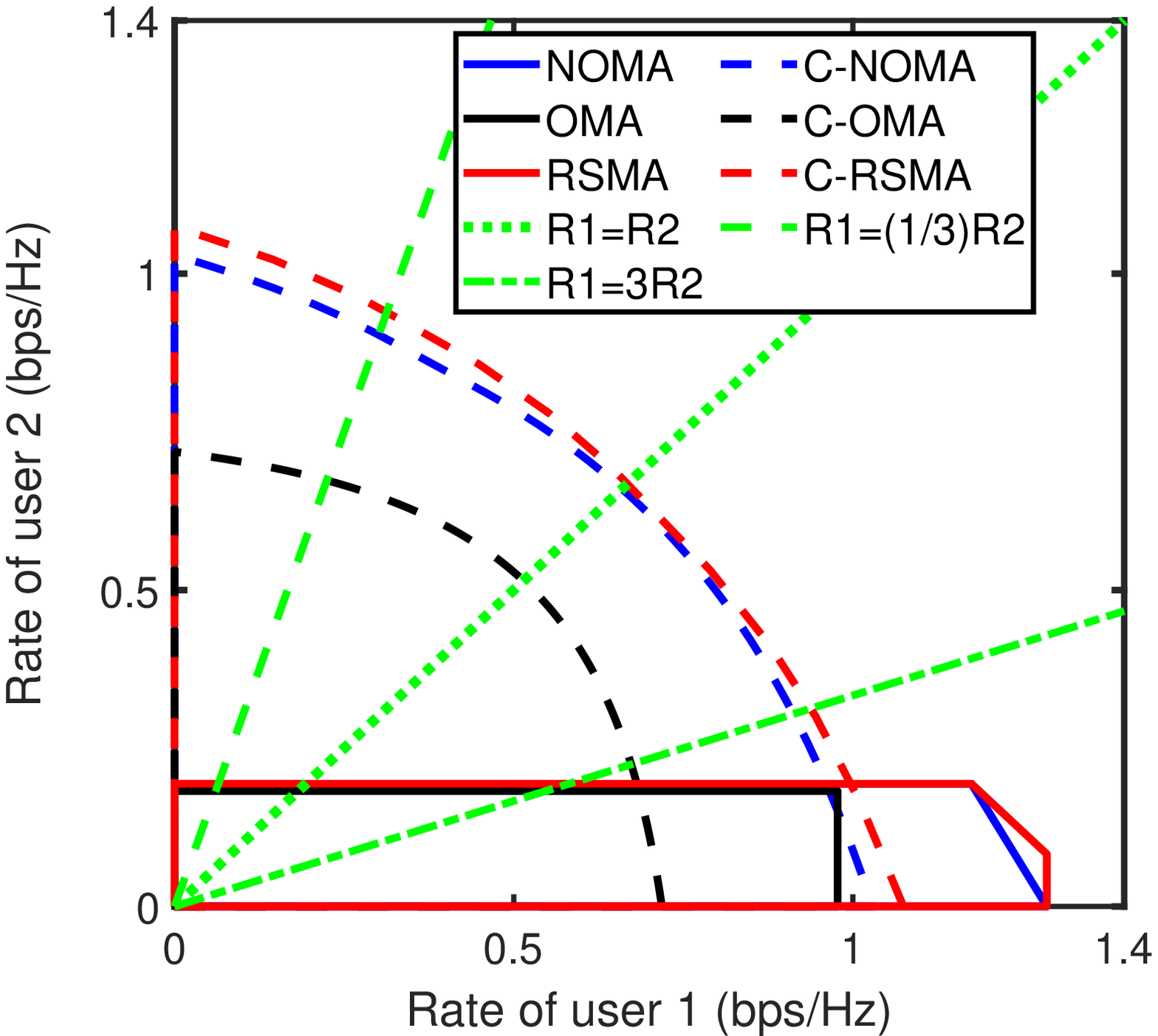} 
\caption{Low transmit power ($\bar{P}_1=\bar{P}_2=8~\mathrm{dBm}$)}
\label{fig:rates-region-p8d10}
\end{subfigure}
\begin{subfigure}{0.5\textwidth}
\includegraphics[height=5.3cm]{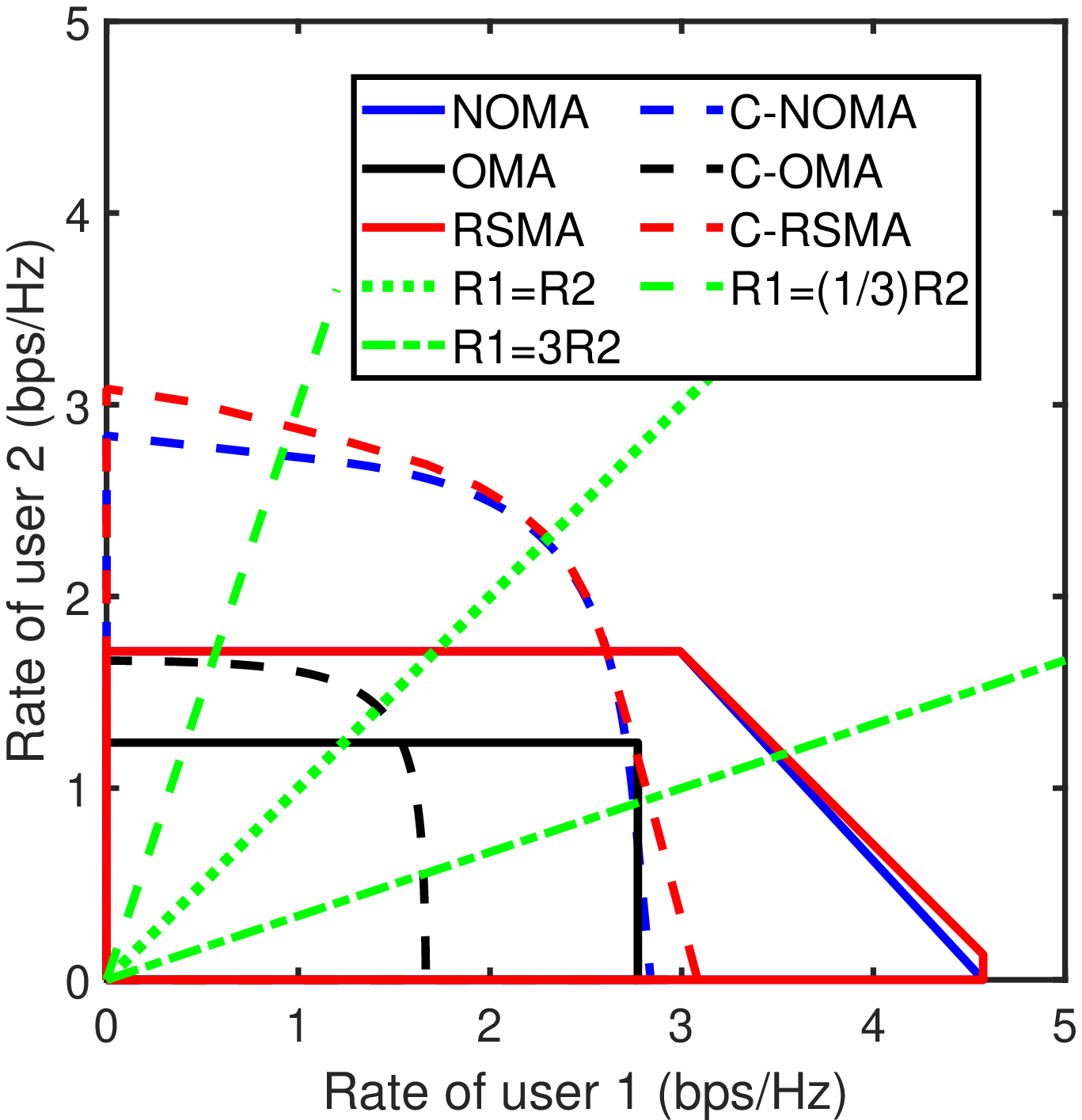}
\caption{High transmit power ($\bar{P}_1=\bar{P}_2=20~\mathrm{dBm}$)}
\label{fig:rates-region-p20d10}
\end{subfigure}
\caption{Rate region for the proposed cooperative schemes and their non-cooperative counterparts for the case that difference between the channel power gains of two users equals $h_1-h_2=10 ~\mathrm{dB}$.}
\label{fig:RR-d10}
\end{figure}

In Fig. \ref{fig:converg}, we can see the convergence of the proposed methods in Algorithm 1 and Algorithm 2. It is shown that the proposed algorithms converge in a few iterations which indicates the efficiency of the proposed SCA-GP method. Also, in this figure, one can see the performance gap between the proposed algorithms and the optimal values. In order to attain optimal values for objective functions of the C-NOMA and C-RSMA problems in (P1) and (P2), we performed an exhaustive search over all feasible power allocation variables. We can see in this figure that the proposed sub-optimal algorithms for (P1) and (P2) have small performance gaps with optimal values. In this figure, we assume that $\bar{P}_1=\bar{P}_2=20~\mathrm{dBm}$, $h_1-h_2=10 ~\mathrm{dB}$, and $f=1,\frac{1}{3}$. \par

\begin{figure}[t]
\begin{subfigure}{0.5\textwidth}
\includegraphics[width=1\linewidth, height=4.2cm]{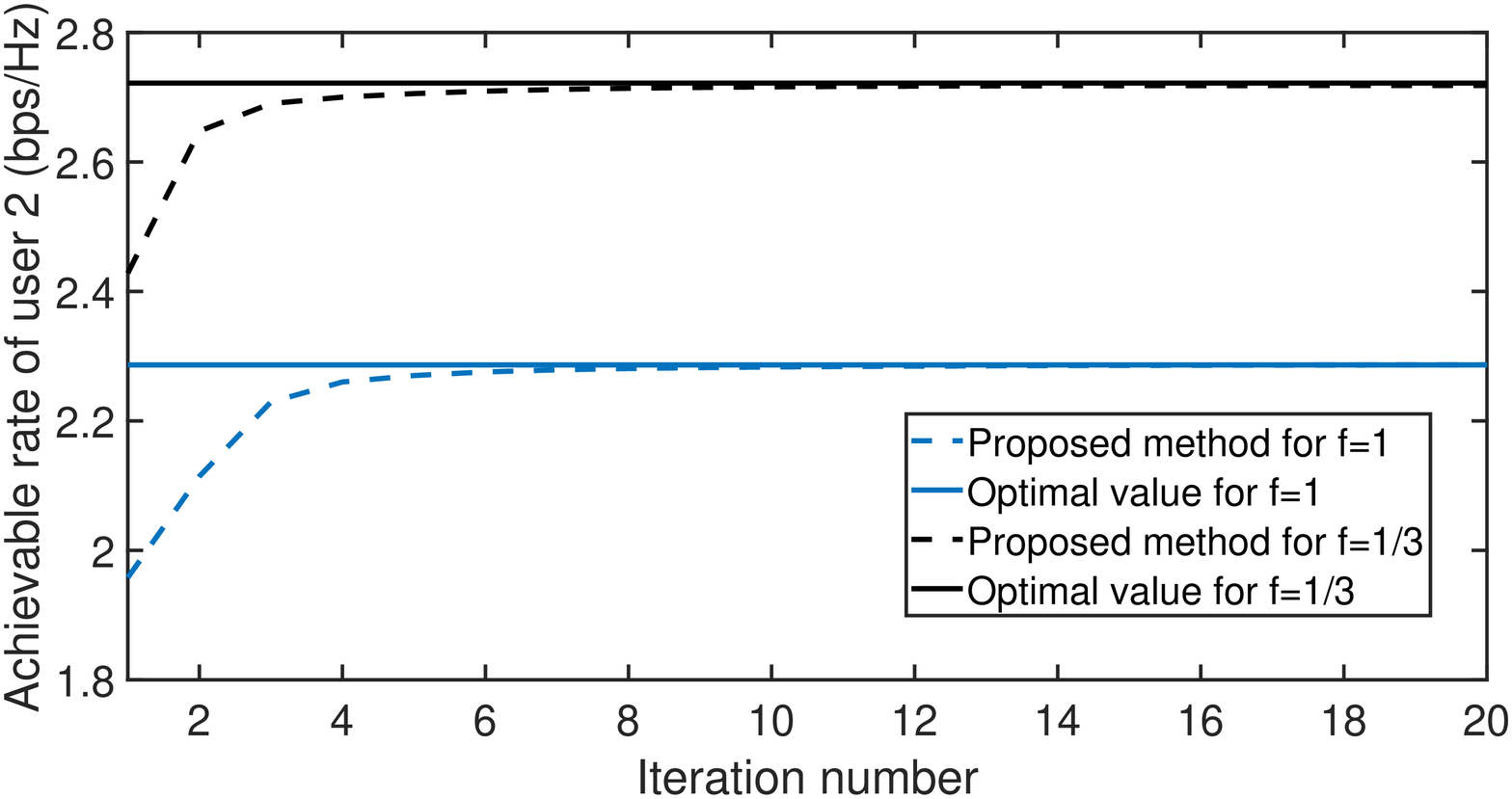}
\caption{Algorithm 1 for C-NOMA}
\label{fig:converg_CNOMA}
\end{subfigure}
\begin{subfigure}{0.5\textwidth}
\includegraphics[width=1\linewidth, height=4.2cm]{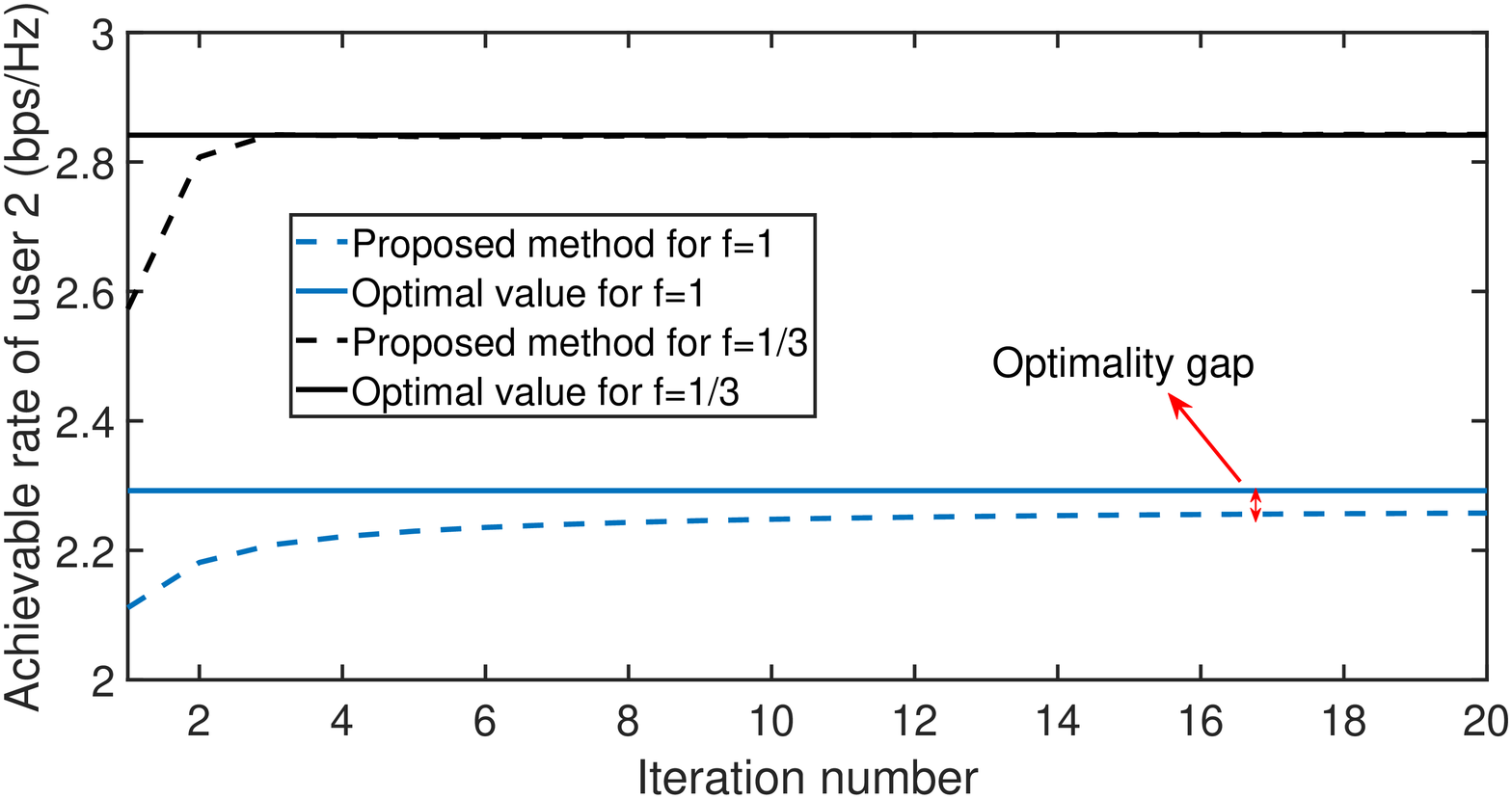}
\caption{Algorithm 2 for C-RSMA}
\label{fig:converg_CRSMA}
\end{subfigure}
\caption{Convergence of the proposed algorithms and their performance gap compared with optimal values.}
\label{fig:converg}
\end{figure}

Fig. \ref{fig:r2_d210_13} indicates the achievable rate of user 2 versus transmit power when the proportional fairness coefficient is $f=\frac{1}{3}$. We can see this achievable rate for C-NOMA and C-RSMA schemes with both the proposed algorithms and exhaustive search. Note that in our optimization problems, we maximize the minimum rate of users with the proportional fairness coefficient, and at the optimal solution of them, we have $R_1=fR_2$. Hence, in the following figures, we only show the achievable rate of user 2, and the achievable rate of user 1 can be easily derived as $R_1=fR_2$. Fig. \ref{fig:r2_d10_13} shows the case that $h_1-h_2=10 ~\mathrm{dB}$. It can be seen that the achievable rates of proposed algorithms for C-NOMA and C-RSMA schemes match with the rates with the exhaustive search which proves the effectiveness of the proposed algorithms for solving optimization problems. We can see that the superiority of C-NOMA and C-RSMA schemes compared with C-OMA increases with transmit power. Also, cooperative schemes have significant performance gains rather than non-cooperative schemes. Fig. \ref{fig:r2_d2_13} indicates the case that $h_1-h_2=2 ~\mathrm{dB}$, in which by increasing transmit power, non-cooperative schemes outperform cooperative ones. As a consequence, these figures indicate that cooperation is meaningful when  $h_1-h_2$ is big.\par

 \begin{figure}[t]
\begin{subfigure}{0.5\textwidth}
\includegraphics[width=1\linewidth, height=4.3cm]{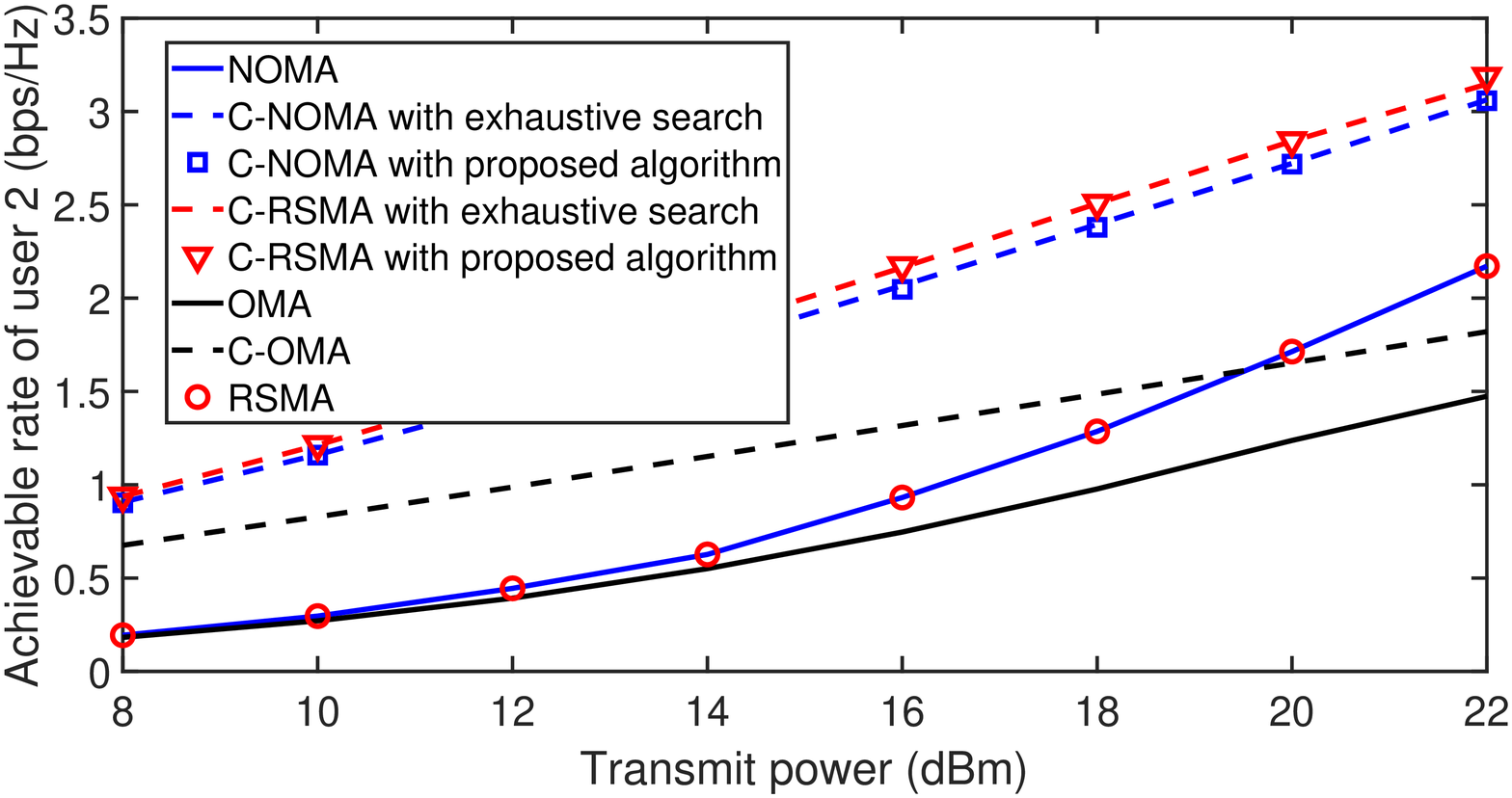} 
\caption{$h_1-h_2=10 ~\mathrm{dB}$}
\label{fig:r2_d10_13}
\end{subfigure}
\begin{subfigure}{0.5\textwidth}
\includegraphics[width=1\linewidth, height=4.3cm]{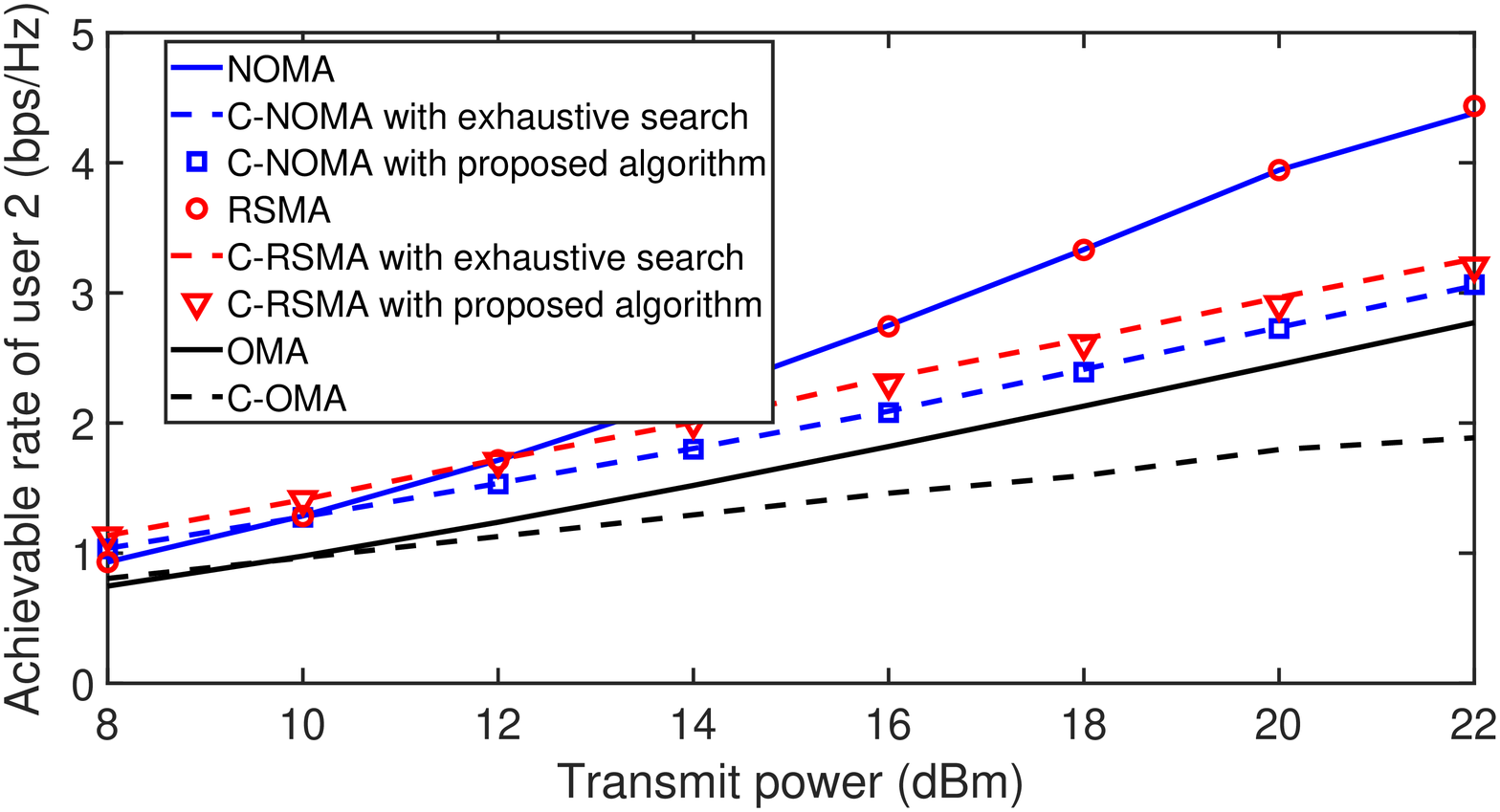}
\caption{$h_1-h_2=2 ~\mathrm{dB}$}
\label{fig:r2_d2_13}
\end{subfigure}
\caption{Achievable rate of user 2 versus transmit power utilizing the proposed algorithms and exhaustive search when the proportional fairness coefficient is $f=\frac{1}{3}$. Two cases $h_1-h_2=2 ~\mathrm{dB}$ and $h_1-h_2=10 ~\mathrm{dB}$ have been considered.}
\label{fig:r2_d210_13}
\end{figure}

Fig. \ref{fig:r2_p820_13} shows the achievable rate of user 2 versus $h_1-h_2$ when the proportional fairness coefficient is $f=\frac{1}{3}$. Fig. \ref{fig:r2_p8_13} shows the case that transmit power is low as $\bar{P}_1=\bar{P}_2=8~\mathrm{dBm}$. One can see that by increasing $h_1-h_2$, the superiority of cooperative schemes rather than non-cooperative schemes increases. This is due to the fact that when $h_1-h_2$ increases, the required rate of user 2 can not be satisfied, and cooperation with user 1 is helpful for user 2.  Fig. \ref{fig:r2_p20_13} indicates the case that transmit power is high as $\bar{P}_1=\bar{P}_2=20~\mathrm{dBm}$. We can see that when $h_1-h_2$ is small, due to their optimal detectors, non-cooperative schemes outperform cooperative schemes. However, by increasing $h_1-h_2$, cooperative schemes outperform non-cooperative ones. By comparing these two figures, we can say that for the superiority of cooperative schemes over their non-cooperative counterparts, more channel difference ($h_1-h_2$) is required at higher transmit powers. From these figures, we can also see that cooperative schemes are very robust to the channel variations, and they can provide almost a constant rate even if the channel power gain of one user goes to deep fading. Finally, one can see that by increasing $h_1-h_2$, the superiority of C-RSMA over C-NOMA decreases. Note that the superiority of C-RSMA over C-NOMA is at the cost of increased complexity in its transmission and detection schemes. \par
%By comparing Fig. \ref{fig:r2_p820_13} and Fig. \ref{fig:r2_p820_1} we can say that by increasing $f$, which means increasing of the required rate of user 1, the required channel difference ($h_1-h_2$) for the superiority of cooperative schemes over their non-cooperative counterparts also increases. In other words, when the required rate of user 1 is higher, cooperation is meaningful if the channel difference is high.

\begin{figure}[t]
\begin{subfigure}{0.5\textwidth}
\includegraphics[width=1\linewidth, height=4.3cm]{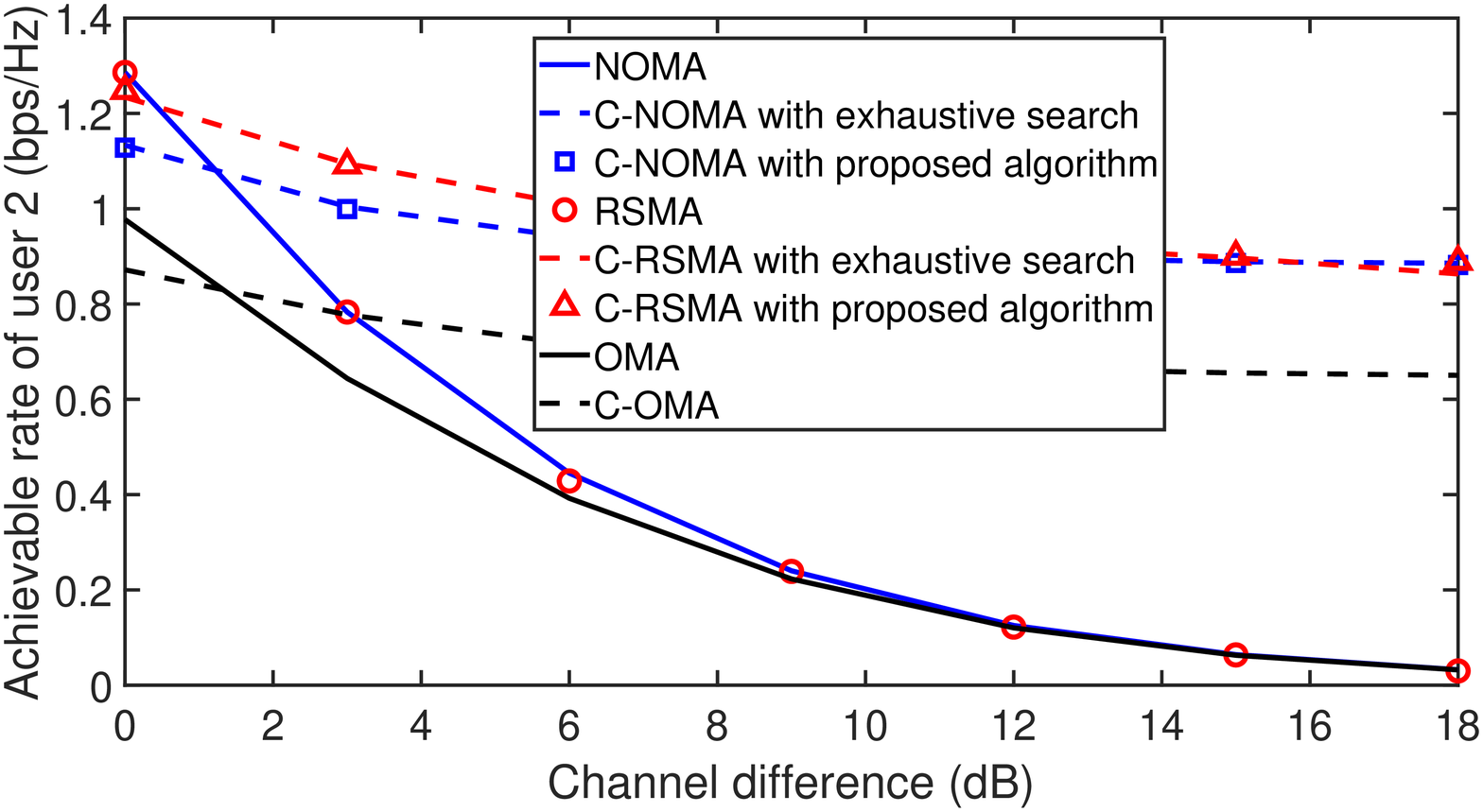} 
\caption{Low transmit power ($\bar{P}_1=\bar{P}_2=8~\mathrm{dBm}$)}
\label{fig:r2_p8_13}
\end{subfigure}
\begin{subfigure}{0.5\textwidth}
\includegraphics[width=1\linewidth, height=4.4cm]{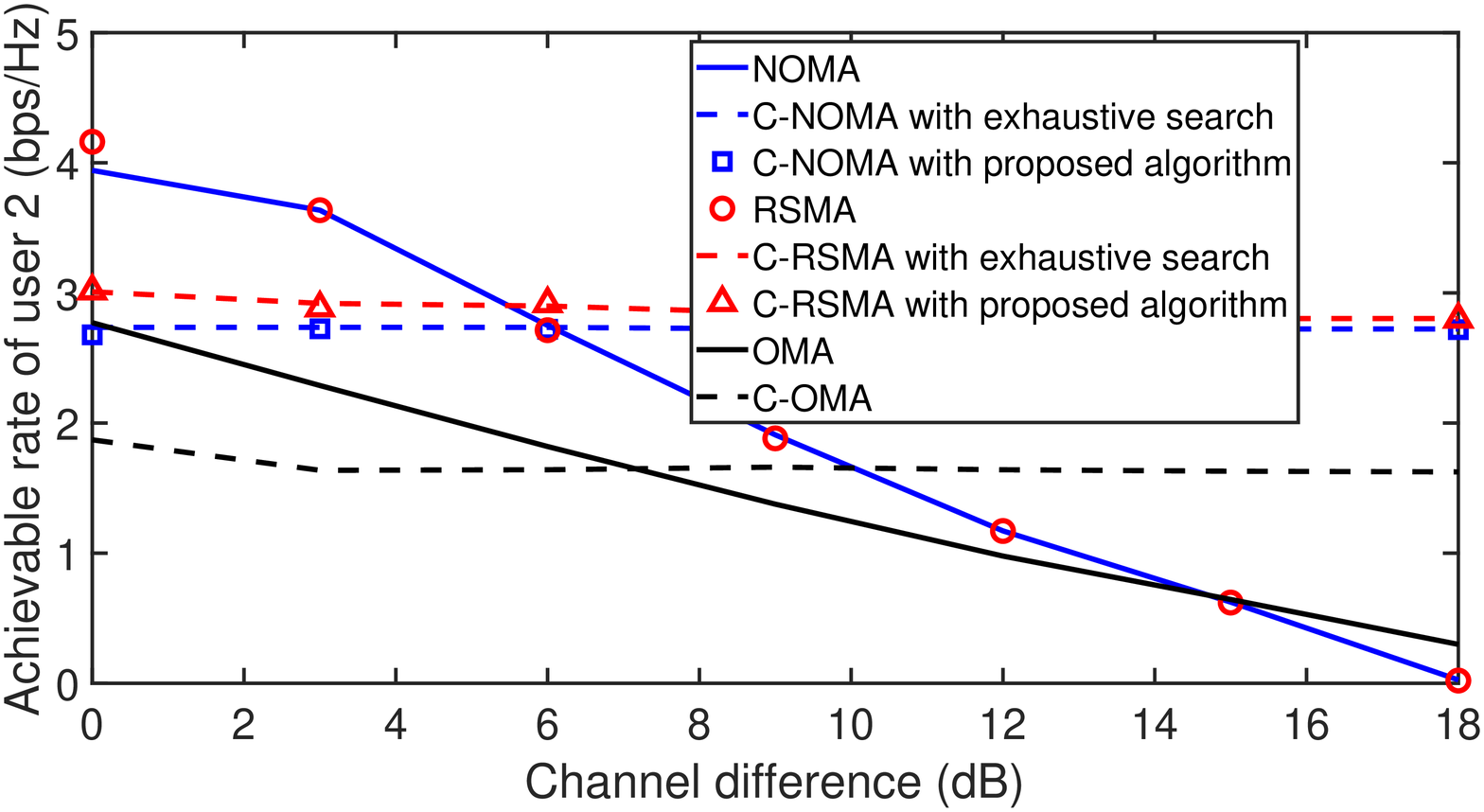}
\caption{High transmit power ($\bar{P}_1=\bar{P}_2=20~\mathrm{dBm}$)}
\label{fig:r2_p20_13}
\end{subfigure}
\caption{Achievable rate of user 2 versus  $h_1-h_2$ when $f=\frac{1}{3}$. Two different transmit powers $\bar{P}_1=\bar{P}_2=8~\mathrm{dBm}$ and $\bar{P}_1=\bar{P}_2=20~\mathrm{dBm}$ have been considered.}
\label{fig:r2_p820_13}
\end{figure}
 
 In Fig. \ref{amp_noise}, we can see the performance loss when the link between two users, i.e., $h_3$, is weak. In this figure, we see the achievable rate of user 2 versus the received SNR at users. For this figure, we assumed that the proportional fairness coefficient equals $f=1/3$, the transmit power at users is equal to $\bar{P}_1=\bar{P}_2=20~\mathrm{dBm}$, and $h_1-h_2=10~ \mathrm{dB}$. The ideal cases for C-OMA, C-NOMA, and C-RSMA schemes show the case that the impact of amplified noise due to AF relaying at users has been ignored. We can see that by increasing the received SNR at users, which is equivalent to a better link between two users, the achievable rate also increases. Indeed, when the received SNR at users is greater than 26 dB, the achievable rates of cooperative schemes are equal to their ideal counterparts, and hence the assumption of ignoring amplified noise is reasonable. For realizing this situation, we are just required to pair two users that the received SNR at their receivers is greater than 26 dB. Hence, we can see that $h_3$ is an important parameter for pairing users. Note that when the received SNR at users is less than 4 dB (6 dB), the non-cooperative RSMA (NOMA) has better performance rather than C-RSMA (C-NOMA), and hence the cooperation is meaningless in these cases. On the other hand, Fig. \ref{fig:r2_p820_13} shows that the achievable rates of the proposed schemes are robust to the variations of the $h_1-h_2$ and Hence, $h_1-h_2$ is not an important parameter for pairing users.\par

\begin{figure}[t]
\centering
%\vspace{.01\linewidth}
\begin{minipage}{.48\linewidth}
%\vspace{1.2cm}
  \includegraphics[width=\linewidth,height=4.3cm]{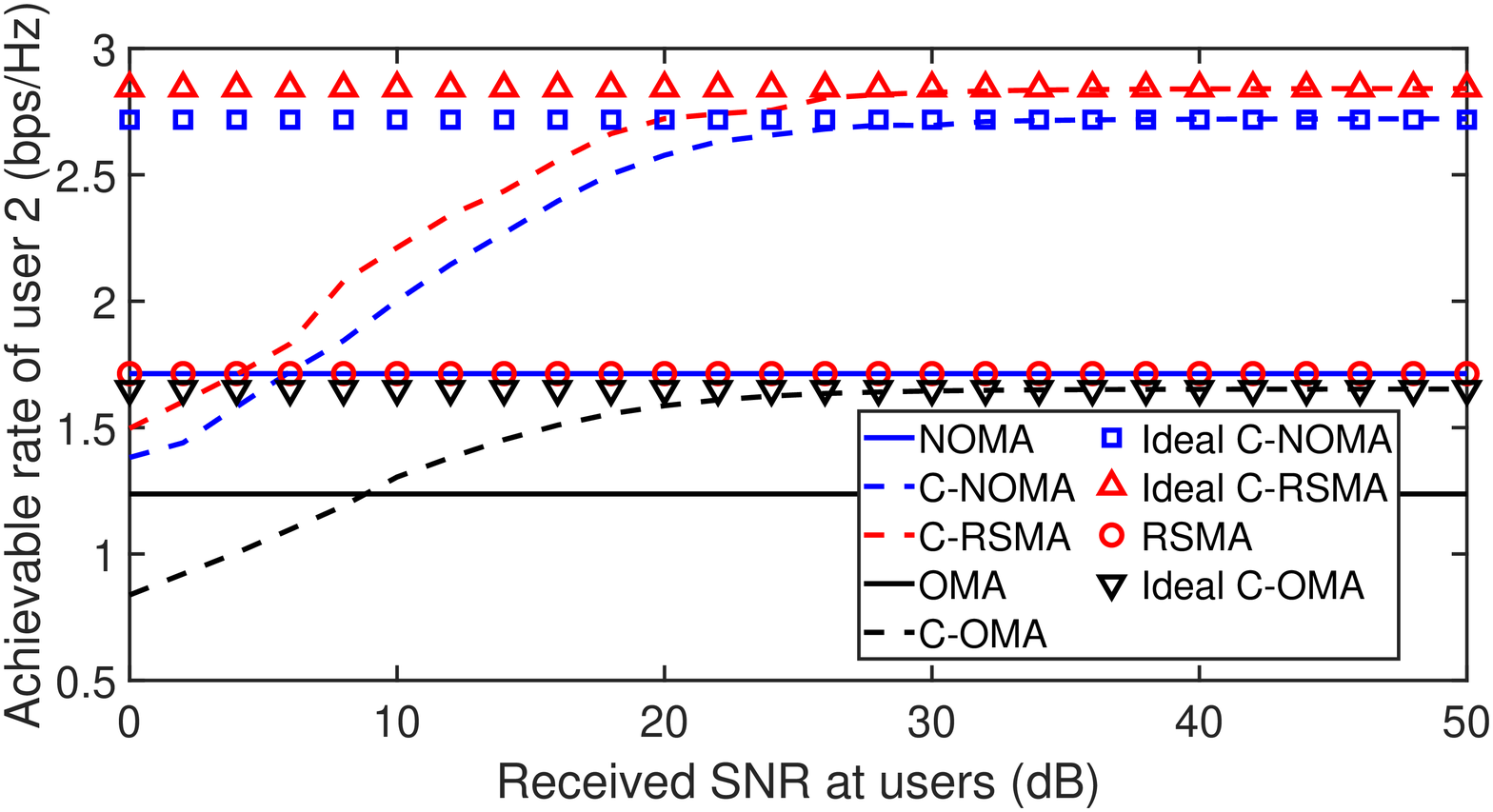}
  \captionof{figure}{Achievable rate of user 2 versus the received SNR at users. The ideal case for C-OMA, C-NOMA, and C-RSMA schemes shows the situation that the link between two users, i.e., $h_3$, is strong enough to ignore the noise at their receivers.} 
  \label{amp_noise}
\end{minipage}
 \hspace{.02\linewidth}
\begin{minipage}{.48\linewidth}
  \includegraphics[width=\linewidth,height=4.3cm]{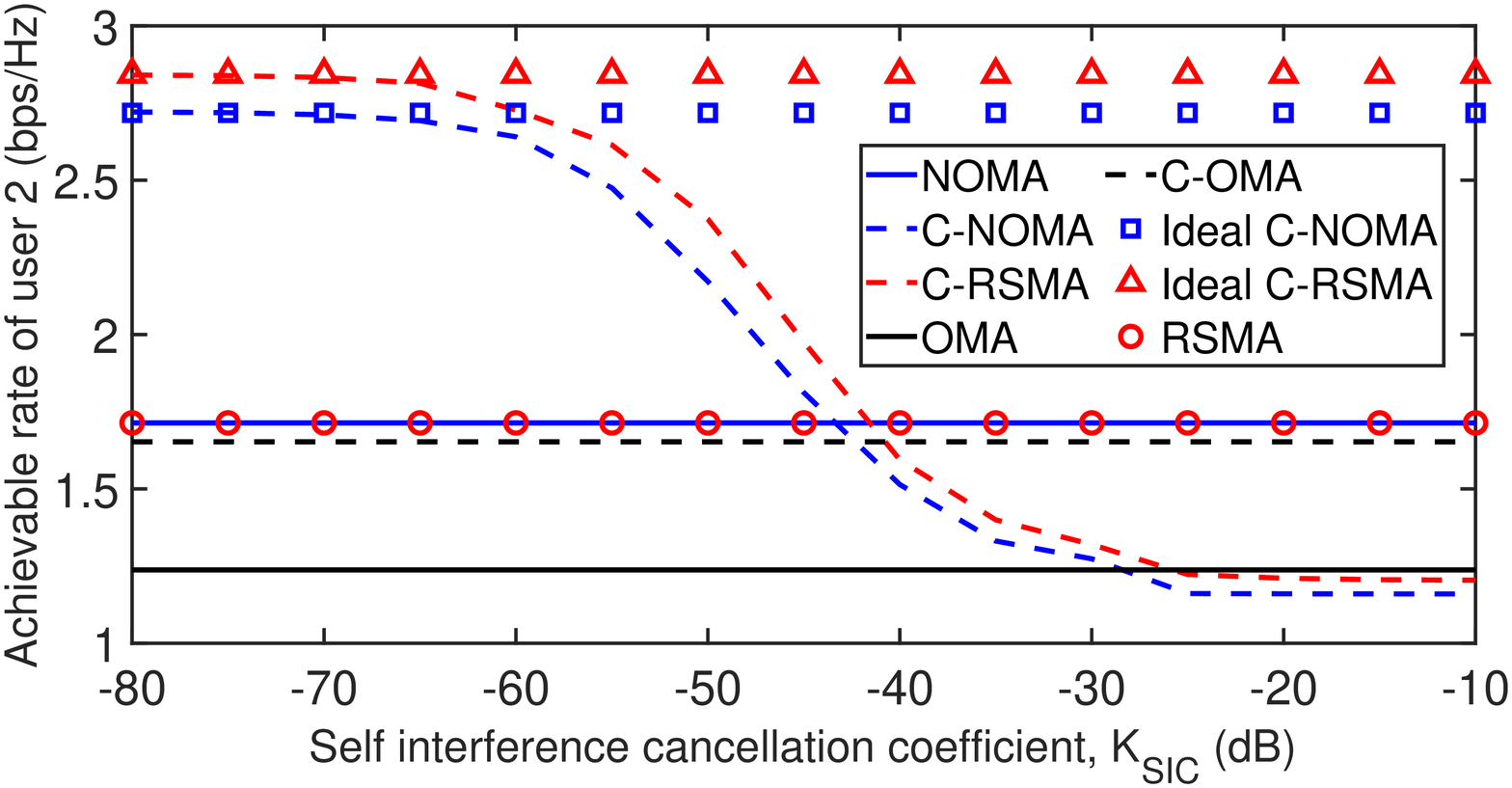}
  \captionof{figure}{Achievable rate of user 2 versus the self-interference cancellation coefficient at users, $K_{SIC}$ (dB). The ideal case shows the situation where the self-interference cancellation is performed perfectly at users, i.e., $K_{SIC}=0$.}
  \label{self_inter}
\end{minipage}
\end{figure}

The effect of residual self-interference in the performance of the proposed schemes has been depicted in Fig. \ref{self_inter}. In this figure, we see the achievable rate of user 2 versus the self-interference cancellation coefficient at users, $K_{SIC}$ (dB). Note that $K_{SIC}$ shows the amount of self-interference attenuation that can be performed at each user. The case $K_{SIC}=0$ shows the ideal case that there is no residual self-interference at users, and the case $K_{SIC}=1$ indicates the situation that no self-interference cancellation is performed at users.  For this figure, we assumed that the proportional fairness coefficient equals $f=1/3$, the transmit power at users is equal to $\bar{P}_1=\bar{P}_2=20~\mathrm{dBm}$, $h_1-h_2=10~dB$, and the received SNR at users is equal to 50 dB. We can see that by decreasing $K_{SIC}$, which is equivalent to a less residual self-interference at users, the achievable rate increases. Indeed, when $K_{SIC}$ is less than -65 dB, the achievable rates of cooperative schemes are equal to their ideal counterparts, and hence the assumption of ignoring self-interference is rational. Also, note that when $K_{SIC}$ is greater than -42 dB (-44 dB), non-cooperative RSMA (NOMA) has better performance rather than C-RSMA (C-NOMA).

In Fig. \ref{fig:Out}, we see the outage probability versus transmit power utilizing Monte-Carlo simulations for all schemes, and utilizing derived analytical expressions in Proposition \ref{proposition-outage} for C-NOMA and C-RSMA schemes. At these simulations, we assumed that $d_1=100~\mathrm{m}$, $d_2=120~\mathrm{m}$, and the required rates of users equal $R_{1}^{th}=R_{2}^{th}=0.5 ~\mathrm{bps/Hz}$ which means that $f=1$. Also, in the Monte-Carlo simulations, we depicted figures for $10^6$ realizations of the channels. In Fig. \ref{fig:Out_d10}, the difference between the average channel power gains of two users has been considered as  $\bar{h}_1-\bar{h}_2=10 ~\mathrm{dB}$. In this figure, we can see that simulation results match well with the analytical expressions derived in Proposition \ref{proposition-outage}. Also, matching with Corollary \ref{corollary1}, we can see that the diversity orders of the proposed cooperative schemes are two which is twice the diversity orders of non-cooperative schemes. We can see that the proposed C-NOMA and C-RSMA schemes have better outage performance rather than C-OMA and non-cooperative schemes. In Fig. \ref{fig:Out_d15}, by increasing $\bar{h}_1-\bar{h}_2=15 ~\mathrm{dB}$, we can see that the cooperative schemes are more robust  than non-cooperative schemes against this channel variation. Hence, the outage performance superiority of the cooperative schemes over non-cooperative schemes has been enhanced by increasing  $\bar{h}_1-\bar{h}_2$.

\begin{figure}[t]
\begin{subfigure}{0.5\textwidth}
\includegraphics[width=1\linewidth, height=4.3cm]{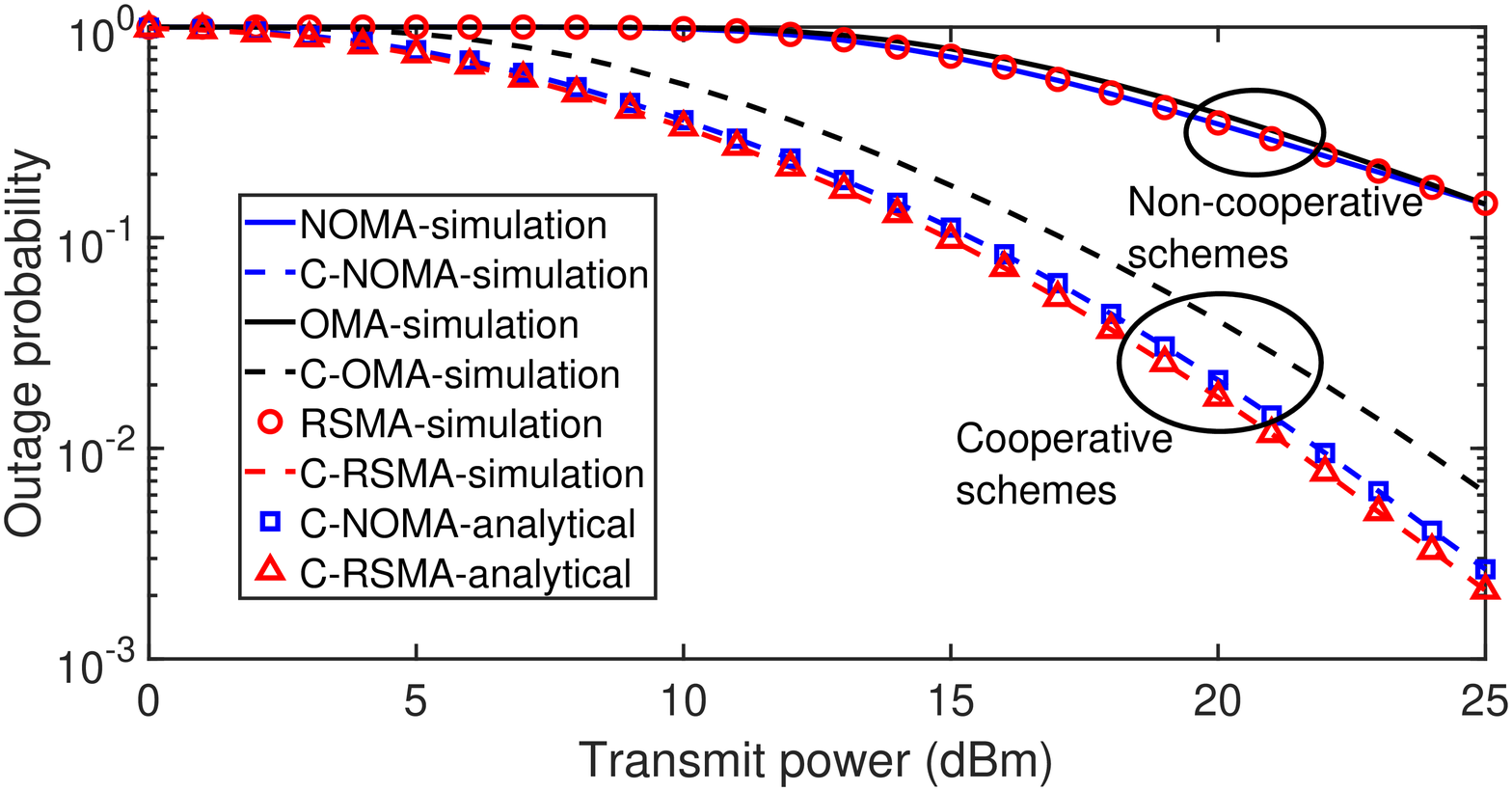} 
\caption{$\bar{h}_1-\bar{h}_2=10 ~\mathrm{dB}$}
\label{fig:Out_d10}
\end{subfigure}
\begin{subfigure}{0.5\textwidth}
\includegraphics[width=1\linewidth, height=4.3cm]{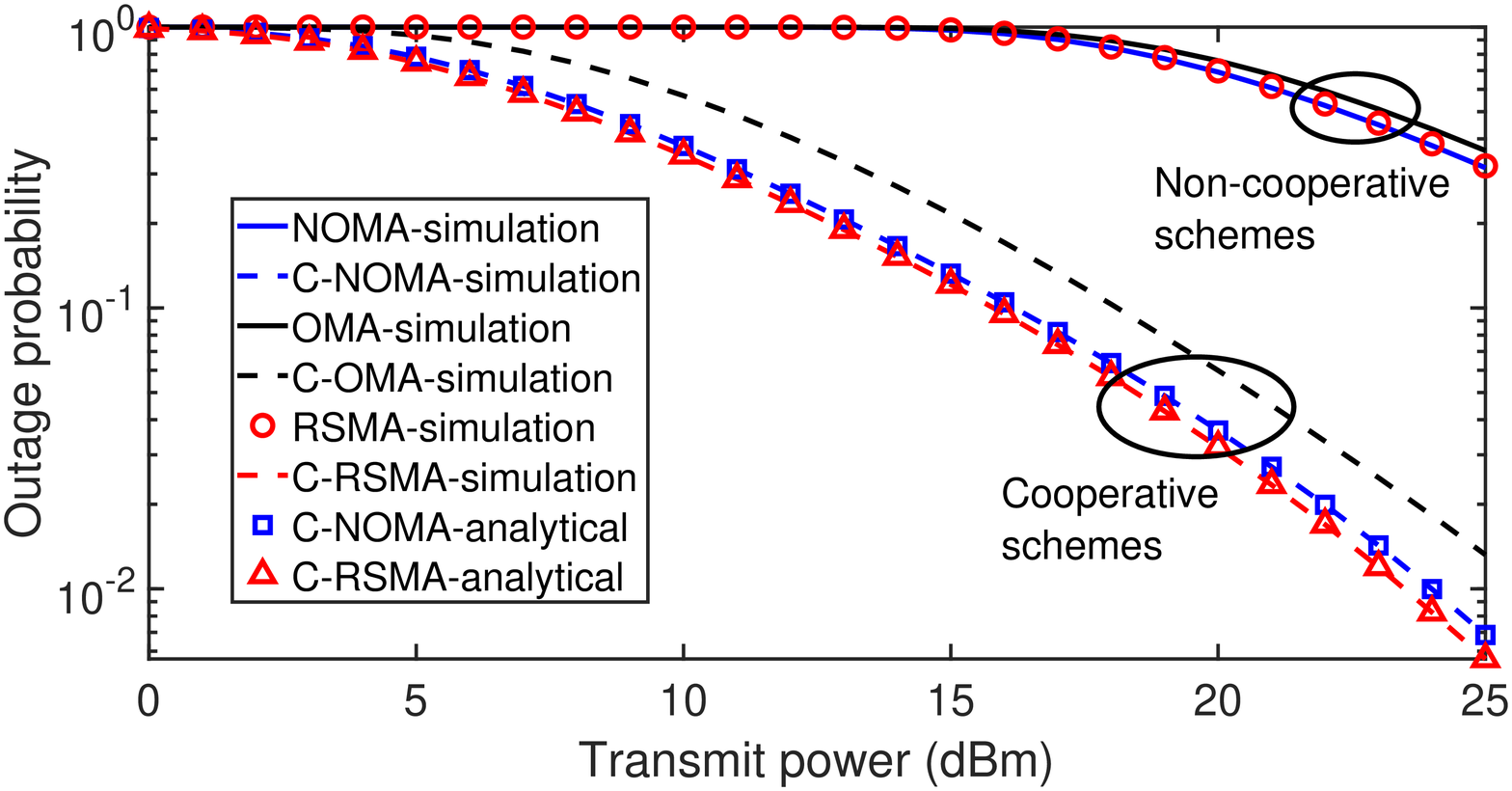}
\caption{$\bar{h}_1-\bar{h}_2=15 ~\mathrm{dB}$}
\label{fig:Out_d15}
\end{subfigure}
\caption{Outage probability versus transmit power utilizing Monte-Carlo simulations and derived analytical expressions when the required rates of users are $R_{1}^{th}=R_{2}^{th}=0.5 ~\mathrm{bps/Hz}$. Two cases $\bar{h}_1-\bar{h}_2=10 ~\mathrm{dB}$ and $\bar{h}_1-\bar{h}_2=15 ~\mathrm{dB}$ have been considered.}
\label{fig:Out}
\end{figure}

In Fig. \ref{fig:CDF}, we can see the CDF of the achievable rates of users, i.e., $R=R_1=R_2~\mathrm{bps/Hz}$, utilizing Monte-Carlo simulations. At these simulations, we assumed that $d_1=100~\mathrm{m}$, $d_2=120~\mathrm{m}$, and transmit powers of users equal $\bar{P}_1=\bar{P}_2=20~\mathrm{dBm}$. In Fig. \ref{fig:CDF_p20_d10}, the difference between the average channel power gains of two users has been considered as  $\bar{h}_1-\bar{h}_2=10 ~\mathrm{dB}$. From this figure, we can see that the cooperative schemes outperform non-cooperative schemes. Also, the proposed C-NOMA and C-RSMA schemes have better performance rather than the C-OMA scheme. In Fig. \ref{fig:CDF_p20_d15}, by increasing  $\bar{h}_1-\bar{h}_2=15 ~\mathrm{dB}$, we can see that the cooperative schemes are more robust  than non-cooperative schemes against this channel variation.  %Fig. \ref{fig:Out} and Fig. \ref{fig:CDF} show that due to their higher diversity order, the proposed cooperative schemes are more robust to the deep fading rather than the non-cooperative schemes.
\begin{figure}[t]
\begin{subfigure}{0.5\textwidth}
\includegraphics[width=1\linewidth, height=4.3cm]{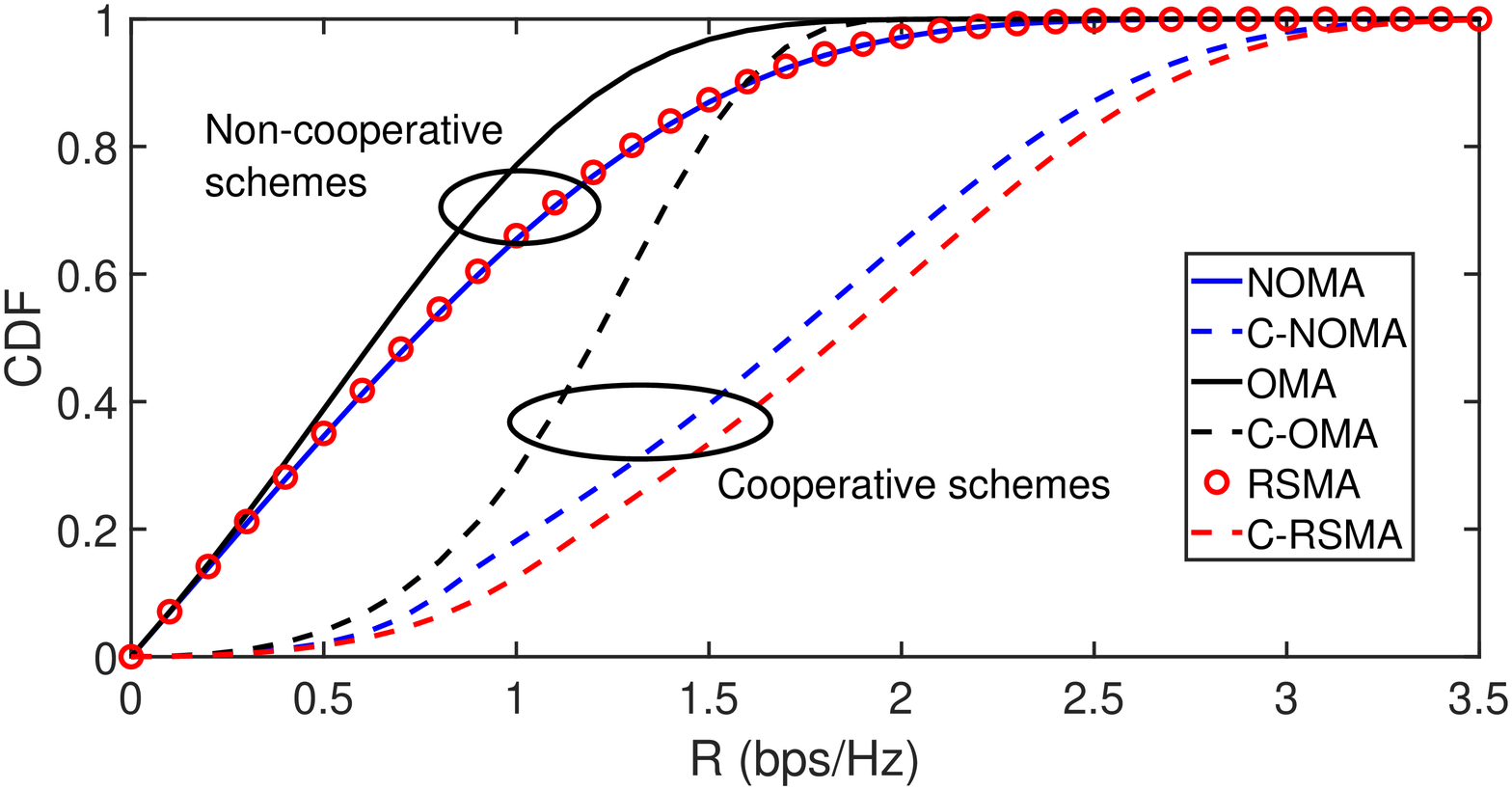} 
\caption{$\bar{h}_1-\bar{h}_2=10 ~\mathrm{dB}$}
\label{fig:CDF_p20_d10}
\end{subfigure}
\begin{subfigure}{0.5\textwidth}
\includegraphics[width=1\linewidth, height=4.3cm]{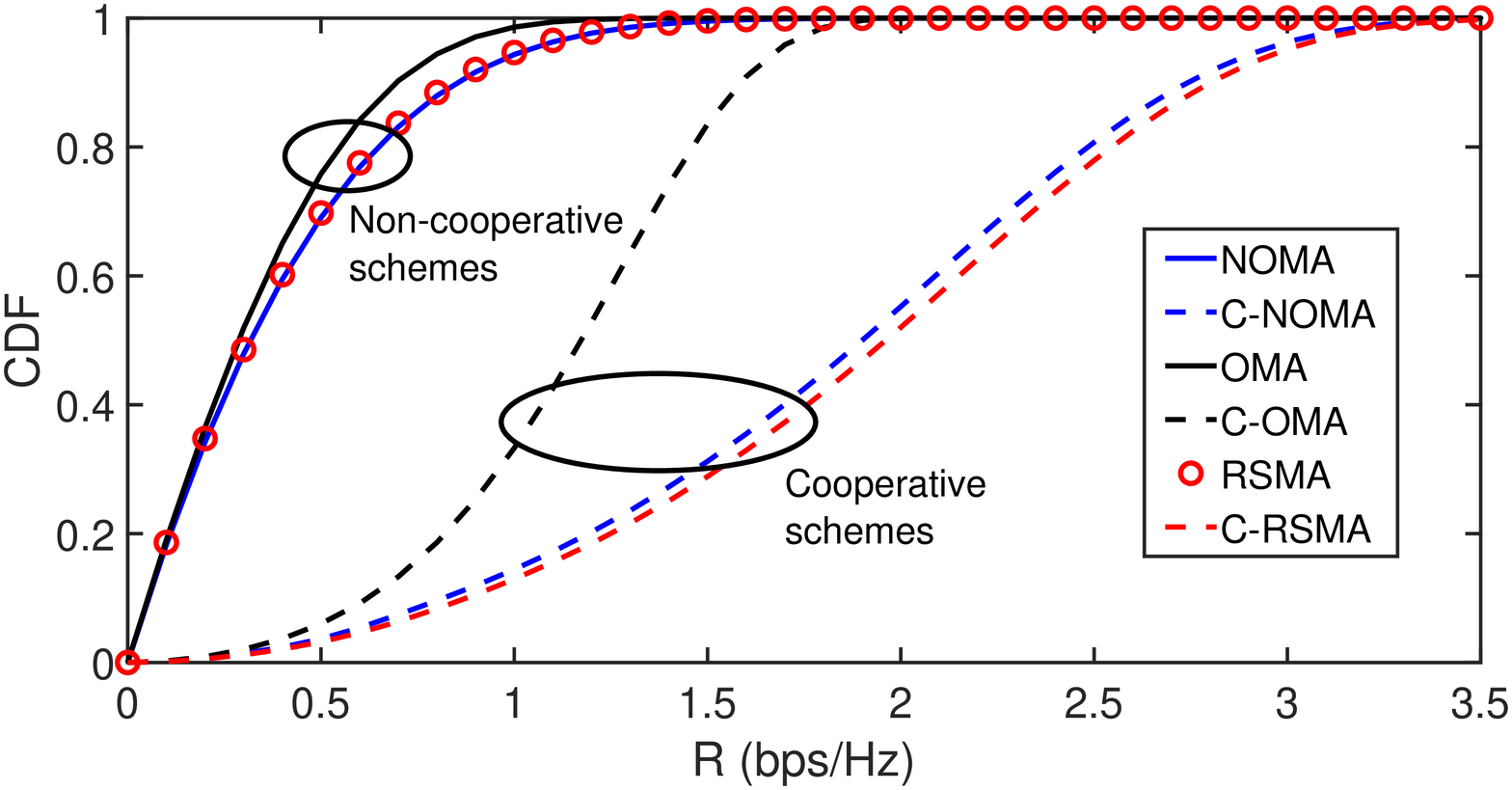}
\caption{$\bar{h}_1-\bar{h}_2=15 ~\mathrm{dB}$}
\label{fig:CDF_p20_d15}
\end{subfigure}
\caption{CDF of the achievable rates of users, i.e.,  $R=R_1=R_2~\mathrm{bps/Hz}$ when transmit powers are $\bar{P}_1=\bar{P}_2=20~\mathrm{dBm}$. Two cases $\bar{h}_1-\bar{h}_2=10 ~\mathrm{dB}$ and $\bar{h}_1-\bar{h}_2=15 ~\mathrm{dB}$ have been considered.}
\label{fig:CDF}
\end{figure}

\section{Conclusion}
 In this paper, two novel transmission and detection schemes were proposed for uplink communication. We derived the achievable rates corresponding to these cooperative schemes, and formulated two optimization problems to maximize the minimum rate of two users by considering the proportional fairness coefficient. We proposed two power allocation algorithms based on SCA and GP to solve these non-convex problems. We proved that the proposed efficient algorithms are guaranteed to converge. Next, we derived the asymptotic outage probability of the proposed C-NOMA and C-RSMA schemes, and we proved that the proposed cooperative schemes achieve a diversity order of two. Finally, simulation results showed that C-RSMA has better performance rather than C-NOMA, and C-NOMA has a much better performance compared with C-OMA. These superiorities are at the cost of increased complexity at the transmitter and receiver of C-NOMA and C-RSMA schemes. Also, simulation results showed that due to their higher diversity order, cooperative schemes are very robust to the channel variations compared with non-cooperative schemes.
 
% if have a single appendix:
%\appendix[Proof of the Zonklar Equations]
% or
%\appendix  % for no appendix heading
% do not use \section anymore after \appendix, only \section*
% is possibly needed
% use appendices with more than one appendix
% then use \section to start each appendix
% you must declare a \section before using any
% \subsection or using \label (\appendices by itself
% starts a section numbered zero.)
%
\appendices
\section{Proof of Lemma 1}\label{lemma 1}
In order to prove that $f$ is convex, we calculate its first and second order derivatives as 
\small
\begin{equation}
\nabla f=
\begin{bmatrix}
\frac{\partial f}{\partial x}\\
\frac{\partial f}{\partial y}
\end{bmatrix}
=
\begin{bmatrix}
\frac{-a}{x^2+ax+\frac{bx^2}{y}}\\
\frac{-b}{y^2+by+\frac{ay^2}{x}}
\end{bmatrix},\nabla^{2} f=
\begin{bmatrix}
\frac{\partial f  ^{2}}{\partial  x ^{2}} & \frac{\partial f  ^{2}}{\partial  y \partial  x} \\
\frac{\partial f^{2}}{\partial  x \partial y} & \frac{\partial f  ^{2}}{\partial  y ^{2}}
\end{bmatrix}
=\frac{1}{u^{2}}
\begin{bmatrix}
\frac{2ay^2}{x}+\frac{a^2y^2}{x^2}+\frac{2aby}{x}& -ab\\
-ab&\frac{2bx^2}{y}+\frac{b^2x^2}{y^2}+\frac{2abx}{y}
\end{bmatrix},
\end{equation}
\normalsize
where $u=xy+ay+bx$. In order to show that function $f$ is a convex function, we must prove that the Hermitian matrix $\nabla^{2} f$ is a positive definite matrix. For this end, we utilize Sylvester's criterion. According to this criterion, the Hermitian matrix $\nabla^{2} f$ is positive definite if and only if the upper-left 1-by-1 corner of $\nabla^{2} f$, and
the upper left 2-by-2 corner of $\nabla^{2} f$ have a positive determinant. We can see that both of these criteria are true for Matrix $\nabla^{2} f$.

\section{Proof of Proposition \ref{prop-cnoma-opt} }\label{appendix-cnoma}
First, we perform variable changes $x_1=\frac{\gamma_2p_2^1}{\gamma_1p_1^1}+\frac{1}{\gamma_1p_1^1}$, $y_1=\frac{\gamma_1p_1^2}{\gamma_2p_2^2}+\frac{1}{\gamma_2p_2^2}$, $x_2=\frac{1}{\gamma_1p_1^2}$, and $y_2=\frac{1}{\gamma_2p_2^1}$. Then, we rewrite the achievable rates of C-NOMA scheme in (\ref{r1_cnoma}) and (\ref{r2_cnoma}) with these new variables as
\small
\begin{equation}\label{r1_cnoma_lb}
  R_1^{\mathrm{C-NOMA}}=\frac{1}{2}\log_2(1+\frac{1}{x_1}+\frac{1}{y_1}), R_2^{\mathrm{C-NOMA}}=\frac{1}{2}\log_2(1+\frac{1}{x_2}+\frac{1}{y_2}).  
\end{equation}
\normalsize
According to Lemma \ref{lemma1}, these rates are convex with respect to $x_1$, $x_2$, $y_1$, and $y_2$, though they are not convex with respect to power allocation coefficients. In the SCA method, we solve the original non-convex problem by solving approximated convex problems of the original problem around one initial point iteratively. We know that the first-order Taylor series expansion of a convex function $f(z)$ provides a global lower-bound for that function, i.e., $f(z)\geq f(z_0)+\nabla f(z_0)^T(z-z_0)$.
Considering this point that we have a convex approximation of the rates with respect to $x_1$, $x_2$, $y_1$, and $y_2$, we can approximate them at each iteration $l+1$ by their first-order Taylor series expansion around the solution of previous iteration $l$. Therefore, the achievable rate of user $i$ at the $(l+1)^{\text{th}}$ iteration can be approximated by
\small
\begin{equation}\label{taylor}
    R_{\text{lb},i}^{l+1,\mathrm{C-NOMA}}=\frac{1}{2}R_{\text{lb},i}^{l,\mathrm{C-NOMA}}+\frac{1}{2}d_{i1}^l(x_i^{l+1}-x_i^l)+\frac{1}{2}d_{i2}^l(y_i^{l+1}-y_i^l),
\end{equation}
\normalsize
 where $d_{i1}^l$ and $d_{i2}^l$  are partial derivatives as $d_{i1}^l=\frac{-1}{\ln{2}\big((x_i^l)^2+x_i^l+(\frac{x_i^l}{y_i^l})\big)}$ and $d_{i2}^l=\frac{-1}{\ln{2}\big((y_i^l)^2+y_i^l+(\frac{y_i^l}{x_i^l})\big)}$ (for $i\in\{1, 2\}$). By replacing original power allocation variables instead of $x_1$, $x_2$, $y_1$, and $y_2$ in (\ref{taylor}), we obtain (\ref{rate-cnoma_lb}). It is clear that $R_{\text{lb},i}^{l+1,\mathrm{C-NOMA}}$ in (\ref{taylor}) is an affine function with respect to $x_i^{l+1}$ and $y_i^{l+1}$. Finally, note that due to applying first-order Taylor expansion for a convex function, we have $R_{i}^{l+1,\mathrm{C-NOMA}}>R_{\text{lb},i}^{l+1,\mathrm{C-NOMA}}$, and due to the  maximization of approximated function at each iteration $l$, we have $R_{\text{lb},i}^{l+1,\mathrm{C-NOMA}}>R_{\text{lb},i}^{l,\mathrm{C-NOMA}}$. Considering this point that Taylor expansion of a function around an initial point is exactly equal to the value of the original function at that point, i.e., $R_{\text{lb},i}^{l,\mathrm{C-NOMA}}=R_{i}^{l,\mathrm{C-NOMA}}$, we can conclude that $R_{i}^{l+1,\mathrm{C-NOMA}}>R_{i}^{l,\mathrm{C-NOMA}}$. Hence, the rates $R_{i}^{l,\mathrm{C-NOMA}}$ are non-decreasing with $l$ and the proof is completed.

\section{Proof of Proposition \ref{proposition-outage} }\label{appendix-outage}
The achievable rates of users at C-NOMA and C-RSMA schemes have been derived in Proposition \ref{proposition-achievable-rate-C-NOMA} and \ref{proposition-achievable-rate-C-RSMA}, respectively. Note that the allocated powers for these schemes in P1 and P2 are so that at the optimal solution we have $R_1^{\mathrm{C-NOMA}}=fR_2^{\mathrm{C-NOMA}}$, and $R_1^{\mathrm{C-RSMA}}=fR_2^{\mathrm{C-RSMA}}$. Hence, the outage probability of these schemes can be written as $P_{out}^{C-NOMA}=\mathrm{Pr}( R_{2}^{C-NOMA}<R_{2}^{th})$, and $P_{out}^{C-RSMA}=\mathrm{Pr}( R_{2}^{C-RSMA}<R_{2}^{th})$. From (\ref{r2_cnoma}), the outage probability of the C-NOMA scheme is given by
\small
\begin{equation}
    P_{out}^{C-NOMA}=\mathrm{Pr}(\frac{1}{2}\log_2(1+\gamma_1p_1^2+\gamma_2p_2^1)<R_{2}^{th}),
\end{equation}
\normalsize
where $\gamma_i=\frac{h_i\bar{P}_{i}}{\sigma_{BS}^2}$ for $i\in\{1, 2\}$ is an exponential random variable with the mean value of $\Omega_i=\frac{\bar{P}_{i}\beta_0}{\sigma_{BS}^2d_i^\alpha
}$ indicating the SNR of each user. Now, by utilizing Lemma \ref{lemma2} we have
\small
\begin{equation}
    P_{out}^{C-NOMA}=1-\exp{(-\frac{2^{2R_{2}^{th}}-1}{\Omega_2p_2^1})}-\frac{\exp{(-\frac{2^{2R_{2}^{th}}-1}{\Omega_1p_1^2})}-\exp{(-\frac{2^{2R_{2}^{th}}-1}{\Omega_2p_2^1}})}{1-\frac{\Omega_2p_2^1}{\Omega_1p_1^2}}.
\end{equation}
\normalsize
Next, in order to derive asymptotic outage probability at high SNR, we utilize the approximation $\exp (x)\approx 1+x+\frac{x^2}{2}$ when $x\rightarrow 0$, and after some manipulation we derive (\ref{out_cnoma}). For C-RSMA scheme, it's clear that the second logarithm in (\ref{r2_crsma}), i.e., $\log_2(1+p_2^1p_{22}\gamma_2+p_1^2p_{22}\gamma_1)$, represents $R_{2}^{C-RSMA}$ at high SNR regime. Hence, applying the same procedure with the C-NOMA scheme, the asymptotic outage probability of the C-RSMA scheme can be derived as (\ref{out_crsma}).

%\section{PROOF OF LEMMA \ref{con-log}}\label{proof-con-log}

% you can choose not to have a title for an appendix
% if you want by leaving the argument blank
%Proof of the Proposition 1

% use section* for acknowledgment
%\section*{Acknowledgment}

% Can use something like this to put references on a page
% by themselves when using endfloat and the captionsoff option.
\ifCLASSOPTIONcaptionsoff
  \newpage
\fi

% trigger a \newpage just before the given reference
% number - used to balance the columns on the last page
% adjust value as needed - may need to be readjusted if
% the document is modified later
%\IEEEtriggeratref{8}
% The "triggered" command can be changed if desired:
%\IEEEtriggercmd{\enlargethispage{-5in}}

% references section

% can use a bibliography generated by BibTeX as a .bbl file
% BibTeX documentation can be easily obtained at:
% http://mirror.ctan.org/biblio/bibtex/contrib/doc/
% The IEEEtran BibTeX style support page is at:
% http://www.michaelshell.org/tex/ieeetran/bibtex/
\bibliographystyle{IEEEtran}
% argument is your BibTeX string definitions and bibliography database(s)
\bibliography{myref}
%
% <OR> manually copy in the resultant .bbl file
% set second argument of \begin to the number of references
% (used to reserve space for the reference number labels box)

% biography section
%
% If you have an EPS/PDF photo (graphicx package needed) extra braces are
% needed around the contents of the optional argument to biography to prevent
% the LaTeX parser from getting confused when it sees the complicated
% \includegraphics command within an optional argument. (You could create
% your own custom macro containing the \includegraphics command to make things
% simpler here.)
%\begin{IEEEbiography}[{\includegraphics[width=1in,height=1.25in,clip,keepaspectratio]{mshell}}]{Michael Shell}
% or if you just want to reserve a space for a photo:

%\begin{IEEEbiography}{Michael Shell}
%Biography text here.
%\end{IEEEbiography}

% if you will not have a photo at all:
%\begin{IEEEbiographynophoto}{John Doe}
%Biography text here.
%\end{IEEEbiographynophoto}

% insert where needed to balance the two columns on the last page with
% biographies
%\newpage

%\begin{IEEEbiographynophoto}{Jane Doe}
%Biography text here.
%\end{IEEEbiographynophoto}

% You can push biographies down or up by placing
% a \vfill before or after them. The appropriate
% use of \vfill depends on what kind of text is
% on the last page and whether or not the columns
% are being equalized.

%\vfill

% Can be used to pull up biographies so that the bottom of the last one
% is flush with the other column.
%\enlargethispage{-5in}

% that's all folks
\end{document}